  \providecommand\BibTeX{{%
    Bib\TeX}}}
  \providecommand\BibTeX{{%
    \normalfont B\kern-0.5em{\scshape i\kern-0.25em b}\kern-0.8em\TeX}}}
\definecolor{ocre}{RGB}{243,102,25} 
\newcommand{\reals}{{\mathbb{R}}} 
\newcommand{\nnreals}{{\mathbb{R}_{\geq 0}}}
\newcommand{\dom}{\relax\ifmmode {\mathit{dom}} \else ${\sf dom}$\fi}
\newcommand{\rec}{\mathit{rec}}
\newcommand{\inv}{\mathit{inv}}
\newcounter{theorems}
\newcounter{corollaries}
\newcounter{lemmas}
\newcounter{remarks}
 \newtheorem{corollary}[corollaries]{Corollary}
 \newtheorem{theorem}[theorems]{Theorem}
\newtheorem{lemma}[lemmas]{Lemma}
\newcounter{assumptions}
\newtheorem{assumption}[assumptions]{Assumption} 
\newtheorem{remark}[remarks]{Remark}
\newcounter{definitions}
\newtheorem{definition}[definitions]{Definition}
\newcommand\nocaption{%
    \renewcommand\p@subfigure{}
    \renewcommand\thesubfigure{\thefigure\alph{subfigure}}
}
\DeclareMathDelimiter{(}{\mathopen} {operators}{"28}{largesymbols}{"00}
\DeclareMathDelimiter{)}{\mathclose}{operators}{"29}{largesymbols}{"01}
\begin{document}

\title{Recurrence of Nonlinear Control Systems:\\ Entropy and Bit Rates}

\author{Hussein Sibai}
\email{sibai@wustl.edu}
\affiliation{%
	\institution{Washington University in St. Louis }
	\country{USA}
}

\author{Enrique Mallada }
\email{ mallada@jhu.edu}
\affiliation{%
	\institution{Johns Hopkins University }
	\country{USA}
}



\begin{abstract}
In this paper, we introduce the notion of recurrence entropy in the context of nonlinear control systems.
A set is said to be ($\tau$-)recurrent if every trajectory that starts in the set returns to it (within at most $\tau$ units of time).
Recurrence entropy quantifies the complexity of making a set $\tau$-recurrent measured by the average rate of growth, as time increases, of the number of control signals required to achieve this goal.
Our analysis reveals that, compared to invariance, recurrence is quantitatively less complex, meaning that the recurrence entropy of a set is no larger than, and often strictly smaller than, the invariance entropy. Our results further offer insights into the minimum data rate required for achieving recurrence. We also present an algorithm for achieving recurrence asymptotically.
\end{abstract}

\begin{CCSXML}
<ccs2012>
 <concept>
  <concept_id>00000000.0000000.0000000</concept_id>
  <concept_desc>Do Not Use This Code, Generate the Correct Terms for Your Paper</concept_desc>
  <concept_significance>500</concept_significance>
 </concept>
 <concept>
  <concept_id>00000000.00000000.00000000</concept_id>
  <concept_desc>Do Not Use This Code, Generate the Correct Terms for Your Paper</concept_desc>
  <concept_significance>300</concept_significance>
 </concept>
 <concept>
  <concept_id>00000000.00000000.00000000</concept_id>
  <concept_desc>Do Not Use This Code, Generate the Correct Terms for Your Paper</concept_desc>
  <concept_significance>100</concept_significance>
 </concept>
 <concept>
  <concept_id>00000000.00000000.00000000</concept_id>
  <concept_desc>Do Not Use This Code, Generate the Correct Terms for Your Paper</concept_desc>
  <concept_significance>100</concept_significance>
 </concept>
</ccs2012>
\end{CCSXML}

\ccsdesc[500]{Do Not Use This Code~Generate the Correct Terms for Your Paper}
\ccsdesc[300]{Do Not Use This Code~Generate the Correct Terms for Your Paper}
\ccsdesc{Do Not Use This Code~Generate the Correct Terms for Your Paper}
\ccsdesc[100]{Do Not Use This Code~Generate the Correct Terms for Your Paper}

\keywords{Entropy, Recurrence, Invariance, Control Systems}


\maketitle


	\section{Introduction}\label{sec:intro}
	
	The topological entropy of a dynamical system is a fundamental property, an invariant~\citep{katok1995introduction}, that describes the rate of the exponential growth of the number of trajectories that are distinguishable with arbitrarily small but finite accuracy.
	Originally proposed by Adler, Konheim, and McAndrew~\citep{adler1965topological}, and shortly after reformulated in the form described above by Bowen~\citep{bowen1971entropy,bowen1971periodic}, it provides a quantitative measure of complexity by capturing how the uncertainty around the system state grows as time evolves.
	As a result, topological entropy is closely related to information-theoretic notions, such as the average rate of information gathering about the system state above which one can distinguish its trajectories with arbitrary accuracy~\citep{liberzon2016entropy}. 
	
	In control theory, wherein one uses the system's state information to perform a task, several notions of entropy have been proposed in the literature, characterizing the complexity of and the minimal data rates necessary for performing a certain control task. Examples of this include estimation entropy~\citep{savkin2006analysis,liberzon2016entropy,sibai2017optimal,sibai_hscc_2018,sibai_entropy_tac,kawan2018optimal}, restoration entropy~\citep{matveev2016observation,matveev2019observation}, stabilization entropy~\citep{delchamps1990stabilizing,nair2004topological,colonius2012minimal}, among others.
	One notion of entropy particularly instrumental in control is the invariance entropy~\citep{colonius-kawan-2009,colonius2011invariance,colonius2013note,tomar2022numerical}, which aims to capture the growth rate of the number of distinct control signals necessary to render a certain set invariant. 
	
	Invariance holds a prominent role in control theory. It is, for instance, a core notion in the development of the Lyapunov theory~\citep{Khalil2002}.  By trapping trajectories on sub-level sets of a Lyapunov function, one can guarantee boundedness and completeness of trajectories, stability, and even asymptotic or exponential stability via a gradual reduction of the value of the function. Invariant sets can also be used to estimate regions of attractions of an asymptotically stable equilibrium~\citep{genesio1985estimation}. However, due to intrinsic coupling between the dynamics of the system and the geometry of the set, finding invariant sets and, by extension, Lyapunov functions is often difficult. Furthermore, in the context of controlled systems, it is not always possible to make a given set (controlled) invariant.


	In this work, motivated by recent literature aimed at using the notion of recurrent sets as functional substitutes for invariant sets in control theory~\citep{sbm2022cdc,sspm2023cdc}, we introduce the notion of recurrence entropy for nonlinear control systems. A set is said to be ($\tau$-)recurrent if every trajectory that starts in the set returns to it (within at most $\tau$ units of time). Our analysis shows that recurrence, as a control task, is quantitatively less complex than invariance from the point of view that for a given set and dynamical system, the recurrence entropy is no larger than the invariance entropy.
	Furthermore, we provide upper and lower bounds for the recurrence entropy, a characterization of recurrence entropy as the minimum data rate necessary to render a set recurrent, as well as an accompanying algorithm that achieves this task asymptotically with a bit rate equal to the recurrence entropy upper bound that we derive plus a linear term equal to the product of the system dimension and the desired rate of exponential convergence toward a recurrent trajectory.
	
	\emph{Related Work:} Our work is closely related to the literature of invariance entropy~\citep{colonius-kawan-2009,colonius2011invariance,colonius2013note}. Naturally, since every invariant set is (trivially) $\tau$-recurrent, for all $\tau\geq0$, the results presented therein apply for $\tau=0$. 
 Our work also relates to that of Tomar et al. \citep{tomar2023towards}. That work relates the minimal bit rates needed to enforce a regular safety property for a discrete-time dynamical system to the invariance entropy of a new system that combines the automaton defining the property and the original system. Particularly, $\tau$-recurrence can  
	be thought of as 
	regular safety property, but as we define it here, it is for continuous-time dynamical systems. Relating  our results with \citep{tomar2023towards} would be an interesting future direction.
	It would also be interesting to design numerical methods such as those proposed in \citep{tomar2022numerical} for invariance entropy to estimate recurrence entropy.

	\emph{Organization of the Paper:}
	The rest of the paper is organized as follows. In Section \ref{sec:prelim}, we provide preliminary definitions regarding the system to be considered, as well as the notion of invariance entropy. In Section \ref{sec:recurrence entropy}, we formally introduce the notion of recurrence to be studied in the paper, i.e., $\tau$-recurrence (c.f. Definition \ref{def:recurrent_trajectories}), as well as the associated notion of entropy. We then introduce, in Section \ref{sec:containment lemma}, a fundamental result that allows us to bound the distance from a set that recurrent trajectories can travel if they are required to come back to the set within $\tau$ units of time. A comparison between recurrence and invariance entropy is performed in Section \ref{sec:relation invariance entropy}, upper and lower bounds for recurrence entropy are provided in Section \ref{sec:bounds}, and a relationship between entropy and data rates is formally established in Section \ref{sec:data rates}. We finalize by introducing an algorithm that can make trajectories (asymptotically) $\tau$-recurrent in Section \ref{sec:algorithm} and giving final remarks in Section \ref{sec:conclusions}.
	
	
	

	\section{Preliminaries}\label{sec:prelim}
	
	\emph{Notation:} 
	Given a set $S$, $\text{cl}(S)$ denotes its closure, and $\mathrm{co}(S)$ its convex hull. We denote by $\|\cdot\|$ an arbitrary norm over $\reals^n$, unless otherwise specified. If $S$ is finite, $|S|$ denotes its cardinality. If $N \in \mathbb{N}$, we denote by $[N]$ the set of all non-negative integers less than $N$. We denote the closed $\infty$-norm ball, or hyperrectangle, centered at $x \in \mathbb{R}^n$ with radius $r \in \reals^{\geq 0}$ by $B(x, r)$. Given a compact set $S$, a $\delta$-cover of $S$ is a set of balls of radius $\delta$ whose union contains $S$. We abuse notation and call the set containing the centers of the balls the cover instead of the balls themselves. 
	We denote by $\mathit{grid}(S, \delta)$ the $\delta$-cover of $S$ that is constructed with the centers of $\infty$-norm balls which are $2\delta$ apart and located on axis-parallel lines. We also call it a $\delta$-grid of $S$. We assume that the logarithm function $\log$ is that of base 2 throughout the paper. 
	Fix an $\varepsilon>0$ and a compact set $Q \subset \reals^n$. We define $N_{\varepsilon} (Q) := \{ y \in \reals^n | \exists x \in Q, \|x-y\| \leq \varepsilon \}$. We also define $\lambda_n(Q)$ to be the Lebesgue measure of $Q$. Consider a function with $N$ arguments. If we replace its $i^{\mathit{th}}$ argument with ``$\cdot$'' (i.e., dot) and its other arguments with constants, we mean the projection of that function to the one-dimensional domain of the $i^{\mathit{th}}$ argument with the other ones fixed to the specified constants. If we replace an argument with a set in its domain, we mean the function defined only over that set in the domain.
	
	\subsection{System description} 
	In this paper, we consider control systems that are defined as follows.
	\begin{definition}
		Consider a nonlinear control system of the form:
		\begin{align}
			\label{eq:system}
			\dot{x}(t) = f(x(t),u(t)),
		\end{align}
		where $x(t) \in \reals^n$, $u \in \mathcal{U}$, with $\mathcal{U}$ being a set of piece-wise continuous functions mapping $\nnreals$ to {a compact set $U\subset \reals^m$}, and the map $f$ is locally Lipschitz. 
		We often abuse notation and interchangeably refer to the function $u\in\mathcal{U}$ as well as an input vector $u\in U$. 
	\end{definition}

	\subsection{Invariance entropy}
	
	In this section, we recall the definition of invariance entropy of system~(\ref{eq:system}) from \citep{colonius-kawan-2009}. It requires the definitions of controlled invariant sets, invariant trajectories, and invariance spanning sets. 
	
	\begin{definition}[Controlled invariant sets~\citep{colonius-kawan-2009}]
		A set $Q \subseteq \reals^m$ is {\em controlled invariant} for system~(\ref{eq:system}) if $\forall x \in Q$, $\exists u \in \mathcal{U}$ such that for any $t \geq 0$, $\xi(x,u,t) \in Q$. 
	\end{definition}
	We call controlled invariant sets invariant from hereafter for brevity.
	
	\begin{definition}[$(T,\varepsilon, Q)$-invariant trajectories~\citep{colonius-kawan-2009}]
		Fix any $\varepsilon \geq 0$, $T \geq 0$, compact set $Q \subset \reals^n$, $x \in Q$, and $u \in \mathcal{U}$. 
		The trajectory $\xi(x,u,\cdot)$ of system~(\ref{eq:system}) is $(T,\varepsilon, Q)$-invariant, if for every $t\in [0,T]$, $\xi(x,u,t) \in N_{\varepsilon}(Q)$.  If the condition is $\xi(x,u,t) \in Q$  instead, we say that $\xi$ is $(T, Q)$-invariant.
	\end{definition}
	
	Fix two non-empty sets $K \subseteq Q \subset \reals^n$, an $\varepsilon \geq 0$, and a $T\geq 0$. 
	A set $S\subseteq \mathcal{U}$ is called an {\em invariance} $(T,\varepsilon, K, Q)$-{\em spanning} set if for any $x \in K $, there exists a $u \in S$, such that $\xi(x,u,[0,T])$ is $(T,\varepsilon,Q)$-invariant. Let $r_{\inv}(T,\varepsilon, K, Q)$ be the minimal cardinality of such a set if it exists, and be equal to infinity otherwise. The {\em invariance} entropy of system~(\ref{eq:system}) is defined in \citep{colonius-kawan-2009} as follows:
	\begin{align}
		h_{\inv}(K, Q) &:= \lim_{\varepsilon\searrow 0}\limsup_{T\rightarrow \infty} \frac{1}{T} \log r_{\inv}(T,\varepsilon, K, Q). \label{def:inv_entropy} 
	\end{align}
	
	If the trajectories are required to be
	$(T,Q)$-invariant in 
	(\ref{def:inv_entropy}), then the minimal cardinality of the corresponding invariance spanning set is denoted in \citep{colonius-kawan-2009} by
	$r_{\inv}^*(T,K, Q)$.
	If substituted in 
	(\ref{def:inv_entropy}), the resulting entropy notion 
	$h_{\inv}^*(K, Q)$ is called the 
	{\em strict invariance} entropy of system~(\ref{eq:system}).
	When $K$ is equal to $Q$, we drop the $K$ argument in the definitions above.

	\section{$\tau$-Recurrence Entropy}\label{sec:recurrence entropy}
	
	In this section, we define the main concept that we contribute in this paper: $\tau$-recurrence entropy. Before being able to define it, we need to define controlled $\tau$-recurrent sets, recurrent trajectories, and recurrence spanning sets, in parallel with the definitions preceding the definition of invariance entropy in the previous section.  
	
	\subsection{Recurrence spanning sets and entropy}

	
In the following definition, we introduce controlled $\tau$-recurrent sets as compact subsets of the state space of system~(\ref{eq:system}) which satisfy the following condition:  for each state in such a set, there exists a control signal that drives the system to have a trajectory that visits the set at least once within each time interval of size $\tau$. This concept generalizes for non-autonomous systems the notion of $\tau$-recurrent sets, first introduced in \citep{sbm2022cdc}.
We then define the concept of $(T, \varepsilon, \tau, Q)$-recurrent trajectories, which are ones that return to $N_{\varepsilon}(Q)$ at least once within each time interval of size $\tau$ in the interval $[0,T]$.  


\begin{definition}[Controlled $\tau$-recurrent sets]
	A set $Q \subseteq \reals^m$ is {\em controlled $\tau$-recurrent} for system~(\ref{eq:system}), for some finite $\tau \in \reals^{\geq 0}$, 
	if for every $x \in Q$, there exists a $u \in \mathcal{U}$ such that for any $t \geq 0$, there exists a $t' \in [t, t+\tau]$ such that $\xi(x,u,t') \in Q$. 
\end{definition}
We call controlled $\tau$-recurrent sets $\tau$-recurrent from hereafter for brevity.

\begin{definition}[$(T,\varepsilon, \tau,Q)$-recurrent trajectories]
	\label{def:recurrent_trajectories}
	Fix any $\tau \geq 0$, $\varepsilon \geq 0$, $T \geq \tau$, compact set $Q \subset \reals^n$, $x \in Q$,  and $u \in \mathcal{U}$. The trajectory $\xi(x,u,\cdot)$ of system~(\ref{eq:system}) is $(T, \varepsilon, \tau,Q)$-recurrent, if for every $t\in [0,T - \tau]$, there exists a $t' \in [t, t+\tau]$ such that $\xi(x,u,t') \in N_{\varepsilon}(Q)$.
\end{definition}

For simplicity of notation, if $\varepsilon = 0$, we drop the $\varepsilon$ argument. Similarly, if 
$T =\infty$, we drop 
the $T$ argument. 
We will also use Definition~\ref{def:recurrent_trajectories} for functions of time, that are not necessarily trajectories of system~(\ref{eq:system}) or even continuous, but just piece-wise continuous.


The final definition before that of $\tau$-recurrence entropy is that of spanning sets. They are sets of control signals which are sufficient to make any trajectory starting from a $\tau$-recurrent set $\tau$-recurrent. 

Fix a $\tau \in \reals^{\geq 0}$, a compact $\tau$-recurrent set $Q \subset \reals^n$, an $\varepsilon \geq 0$, and a $T \geq 0$. 
A set $S\subseteq \mathcal{U}$ is called a {\em recurrence} $(T,\varepsilon,\tau, Q)$-{\em spanning} set if for any $x \in Q$, there exists a $u \in S$ such that $\xi(x,u,[0,T])$ is $(T,\varepsilon,\tau,Q)$-recurrent. Let $r_{\rec}(T,\varepsilon,\tau, Q)$ be the minimal cardinality of such a set  if it exists, and be equal to infinity otherwise. We define the $\tau$-{\em recurrence} entropy of system~(\ref{eq:system}) as follows:
\begin{align}
	h_{\rec}(\tau,Q) &:= \lim_{\varepsilon\searrow 0}\limsup_{T\rightarrow \infty} \frac{1}{T} \log r_{\rec}(T,\varepsilon,\tau, Q). \label{def:tau_rec_entropy} 
\end{align}

If we require the trajectories to be
$(T,\tau,Q)$-recurrent in 
(\ref{def:tau_rec_entropy}), then we denote the minimal cardinality of the corresponding spanning set
$r_{\rec}^*(T,\tau,Q)$.
If substituted  in 
(\ref{def:tau_rec_entropy}), we call the resulting entropy notion 
$h_{\rec}^*(\tau,Q)$ the 
{\em strict $\tau$-recurrence} entropy of system~(\ref{eq:system}). 



\section{Containment Lemma}\label{sec:containment lemma}

In this section, we show how trajectories that are $(\tau, Q)$-recurrent cannot depart arbitrarily from $Q$.
The following assumption is instrumental in achieving this goal.
\begin{assumption}[$\tau$-completeness]\label{ass:tau-completeness}
	For any $x\in Q$ and $u\in \mathcal{U}$, the trajectory $\xi(x,u,\cdot)$ is  defined for all $t\in[0,\tau]$ and is continuous in its first argument.
\end{assumption}

An immediate consequence of Assumption \ref{ass:tau-completeness} is that for any $u\in\mathcal{U}$, the closure reachable set  $R(Q,u, \tau) := {\cup_{t\in[0,\tau],x\in Q} \xi(x,u,t)}$ of system~(\ref{eq:system}), i.e., $\text{cl}({R(Q,u,\tau)})$, is compact.
Moreover, it follows from Proposition 5.2 \citep{lin1996smooth}, that under Assumption \ref{ass:tau-completeness}, the set $${R(Q,\tau)}:={\bigcup_{u\in \mathcal{U}}R(Q,u,\tau)}$$ is bounded.
The set ${R(Q,\tau)}$ contains all states visited by trajectories starting from \emph{some} initial state $x\in Q$ and following \emph{some} control $u\in\mathcal{U}$. While such a set is, indeed, bounded, it may be quite big, as not all control inputs are meant to make trajectories recurrent. We will therefore consider the subset $\mathcal{U}_r\subseteq \mathcal{U}$ containing all control inputs $u\in\mathcal{U}$ such that there exists some $x\in Q$, making the trajectory $\xi(x,u,\cdot)$ $(\tau,Q)$-recurrent. 

A similar reasoning as before, using the fact that $\mathcal{U}_r\subseteq\mathcal{U}$, leads to fact that the set
\begin{equation}
\label{eq:R_r}
	R_r(Q,\tau) :=\bigcup_{u\in\mathcal{U}_r}R(Q,u,\tau) \subseteq R(Q,\tau)\,,
\end{equation}
and is therefore bounded. For the purpose of estimating how far out $(\tau,Q)$-recurrent trajectories can reach, we define
\begin{equation}\label{eq:Ltau}
	L_\tau =  \max_{x_1,x_2\in \text{cl}\left({\mathrm{co}(R_r(Q,\tau))}\right),u\in U}\!\!\!\!\!\!\frac{\|f(x_1,u) - f(x_2,u)\|}{\|x_1 - x_2\|} \,<\infty.    
\end{equation}
Note that $L_\tau$ is an upper bound of the Lipschitz constant of the vector field along any $(\tau, Q)$-recurrent trajectory.




The following lemma, which is a generalization of Lemma 2 in \citep{sspm2023cdc}, allows us to obtain an estimate of how far trajectories can go outside a compact set $Q$ within $\tau$ seconds.
\begin{lemma}[Containment Lemma]\label{lem:containment}
	Consider a compact controlled $\tau$-recurrent set $Q$. 
	Then, given any $x\in{Q}$, and $u\in\mathcal{U}_r$ such that $\xi(x,u,\cdot)$ is $(\tau,Q)$-recurrent,  
	the following holds:
	\begin{equation}\label{eq:containment-Q}
		\sup_{t\in\mathbb{R}_{\geq0}} d(\xi(x,u,t),{Q}) \leq F_{Q}\tau e^{L_\tau \tau},
	\end{equation}
	where 
	$d(y,{Q}):=\min_{x\in {Q}} \|y-x\|$,  $L_\tau$ is given in \eqref{eq:Ltau}, and $$F_{Q}:=\sup_{x\in Q,\, u\in U} \|f(x, u)\|<\infty.$$
\end{lemma}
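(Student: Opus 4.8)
The plan is to reduce the global bound to a local displacement estimate over a time window of length at most $\tau$ and then close it with a Gr\"onwall argument. Fix $x\in Q$ and $u\in\mathcal{U}_r$ such that $\xi(x,u,\cdot)$ is $(\tau,Q)$-recurrent, and fix an arbitrary $s\in\mathbb{R}_{\geq 0}$; it suffices to bound $d(\xi(x,u,s),Q)$ by $F_{Q}\tau e^{L_\tau\tau}$ uniformly in $s$. First I would locate a recent visit to $Q$. If $s\geq\tau$, applying the recurrence property (which, with $\varepsilon=0$ and $T=\infty$, says every window $[t,t+\tau]$ contains a visit to $Q$) at $t=s-\tau$ yields a time $t^*\in[s-\tau,s]$ with $\xi(x,u,t^*)\in Q$; if $s<\tau$, I take $t^*=0$ and use $\xi(x,u,0)=x\in Q$. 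In either case there is $t^*\in[\max(0,s-\tau),s]$ with $x^*:=\xi(x,u,t^*)\in Q$ and $0\leq s-t^*\leq\tau$. Since $x^*\in Q$, we have $d(\xi(x,u,s),Q)\leq\|\xi(x,u,s)-x^*\|$, so it remains to bound the displacement of the trajectory over $[t^*,s]$.

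The crux of the argument is to justify that the Lipschitz constant $L_\tau$ of \eqref{eq:Ltau} governs the vector field along this segment, which requires both endpoints of each secant to lie in $\text{cl}\left(\mathrm{co}(R_r(Q,\tau))\right)$. For $r\in[t^*,s]$ I would use the cocycle property to write $\xi(x,u,r)=\xi(x^*,u',r-t^*)$, where $u':=u(\cdot+t^*)$ is the time-shifted control (still in $\mathcal{U}$, as shifting preserves piecewise continuity and the range $U$). The key observation is that the tail of a $(\tau,Q)$-recurrent trajectory is again $(\tau,Q)$-recurrent: translating the recurrence windows by $t^*$ shows that $\xi(x^*,u',\cdot)$ is $(\tau,Q)$-recurrent, so $x^*\in Q$ witnesses $u'\in\mathcal{U}_r$. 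Because $r-t^*\in[0,\tau]$ and $x^*\in Q$, this gives $\xi(x,u,r)\in R(Q,u',\tau)\subseteq R_r(Q,\tau)$, while $x^*\in Q\subseteq R_r(Q,\tau)$ as well (the $t=0$ slice of the reachable set). Hence both $\xi(x,u,r)$ and $x^*$ lie in $\text{cl}\left(\mathrm{co}(R_r(Q,\tau))\right)$, and the estimate $\|f(\xi(x,u,r),u(r))-f(x^*,u(r))\|\leq L_\tau\|\xi(x,u,r)-x^*\|$ holds. I expect this bookkeeping — confirming that the segment never leaves the region on which $L_\tau$ is a valid Lipschitz bound — to be the main obstacle.

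Finally I would set $e(r):=\|\xi(x,u,r)-x^*\|$ and integrate the dynamics from $t^*$. Writing $\xi(x,u,s)-x^*=\int_{t^*}^s f(\xi(x,u,r),u(r))\,dr$, decomposing the integrand as $f(\xi(x,u,r),u(r))=\big(f(\xi(x,u,r),u(r))-f(x^*,u(r))\big)+f(x^*,u(r))$, and bounding the first summand by $L_\tau e(r)$ and the second by $F_{Q}$ (since $x^*\in Q$ and $u(r)\in U$), gives the integral inequality $e(\rho)\leq F_{Q}(\rho-t^*)+L_\tau\int_{t^*}^\rho e(r)\,dr$ for all $\rho\in[t^*,s]$. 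As $\rho\mapsto F_{Q}(\rho-t^*)$ is nondecreasing, Gr\"onwall's inequality yields $e(s)\leq F_{Q}(s-t^*)e^{L_\tau(s-t^*)}\leq F_{Q}\tau e^{L_\tau\tau}$, using $s-t^*\leq\tau$. Therefore $d(\xi(x,u,s),Q)\leq e(s)\leq F_{Q}\tau e^{L_\tau\tau}$, and since $s$ was arbitrary, taking the supremum over $s\in\mathbb{R}_{\geq 0}$ gives \eqref{eq:containment-Q}.
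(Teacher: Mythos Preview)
Your argument is correct and follows the same Gr\"onwall-over-a-$\tau$-window strategy as the paper. The two proofs differ only in bookkeeping: the paper anchors at the first \emph{exit} time from $Q$ and uses the moving projection $\Pi_Q[x(s)]$ as the comparison point in the Lipschitz split (so the integral inequality is written directly for $a(t)=d(x(t),Q)$), whereas you anchor at the last \emph{visit} time $t^*$ and use the fixed reference $x^*$, bounding $d(\cdot,Q)$ by $\|\cdot-x^*\|$. Your explicit verification that the excursion stays inside $\text{cl}(\mathrm{co}(R_r(Q,\tau)))$---via the cocycle argument that the shifted control $u'=u(\cdot+t^*)$ again lies in $\mathcal{U}_r$---is a point the paper leaves implicit, so your write-up is actually a bit more careful on the one step you flagged as the main obstacle.
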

\begin{proof}
	As mentioned before, the proof of this lemma is akin to \citep{sspm2023cdc}, Lemma 2.
	Given $x\in Q$ and the corresponding $u\in \mathcal{U}_r$ that makes $\xi(x,u,\cdot)$ $(\tau,Q)$-recurrent, let $t_1>0$ be the first time the trajectory leaves $Q$, i.e., such that $\xi(x,u,t)\in Q$, for $t\leq t_1$, and for all $\delta>0$ sufficiently small $\xi(x,u,t+\delta)\not\in Q$. Without loss of generality, we assume $t_1<+\infty$.  It then follows from the $(\tau,Q)$-recurrent that for all $t\in [0,t_1+\tau]$,  $\xi(x,u,t)$ can only be outside $Q$ for at most $\tau$ seconds. Using now the short notation $x(t)=\xi(x,u,t)$ we have
	\begin{align*}
		a(t)&:=d(x(t),Q)\leq \|x(t)-x\|\\
		&= \left\|\int_0^t f(x(s),u(s)) ds\right\|\\
		&\leq \int_0^t \|f(x(s),u(s))-f(\Pi_Q[x(s)], u(s)\|\\ & \quad+ \|f(\Pi_Q[x(s)], u(s))\| ds\\
		&\leq \left(\int_{t_1}^t a(s) L_\tau ds\right)_+  + F_Q(t-t_1)_+.
	\end{align*}
	It follows the from Grönwall's inequality (c.f Lemma 2.1 in \citep{Khalil2002}, with $\lambda = F_Q (t-t_1)_+, \mu = L, y(t) = a(t)$)
	that $\forall t\in[0,t_1+\tau]$, 
	\[
	a(t)=d(\xi(x,u,t),Q)\leq F_Q(t-t_1)_+ e^{L_\tau^*(t-t_1)_+}\leq F_Q\tau e^{L_\tau^*\tau}.
	\]
	Finally, by repeating the same argument every  time $\xi(x,u,t)$ leaves $Q$, the result follows.
\end{proof}





\section{Relation between recurrence, $\tau$-recurrence, and invariance entropy}\label{sec:relation invariance entropy}

In this section, we show different relations between recurrence and invariance entropy of system~(\ref{eq:system}). In Theorem~\ref{thm:recurrence_lessthan_invariance}, we show that $\tau$-recurrence entropy is both lower and upper bounded by invariance entropy with different initial and invariant sets. That results in a corollary showing that as $\tau$ approaches zero, $\tau$-recurrence entropy approaches invariance entropy, which is in agreement with the intuition that $\tau$-recurrence with $\tau = 0$ is invariance. In Theorem~\ref{thm:recurrence_lessthan_tau_recurrence}, we show that $\tau$-recurrence entropy is less than $\tau'$-recurrence entropy if $\tau' \geq \tau$. That is in agreement with the intuition that faster recurrence to $Q$ requires more information about the state. 



\begin{theorem}
	\label{thm:recurrence_lessthan_invariance}
	For any $Q \subseteq \reals^n$ that is controlled invariant and $\tau \geq 0$, $h_{inv}(Q, N_{\delta_\tau}(Q)) \leq h_{\rec}(\tau, Q) \leq h_{\mathit{inv}}(Q)$ and $h_{inv}^*(Q, N_{\delta_\tau}(Q)) \leq h_{\rec}^*(\tau, Q) \leq h_{\mathit{inv}}^*(Q)$, where $\delta_\tau$ is the right-hand-side of the containment lemma.
\end{theorem}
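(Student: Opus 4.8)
The plan is to prove both two-sided chains by the same mechanism: compare recurrence and invariance spanning sets at a fixed horizon $T$, obtain an inequality between their minimal cardinalities $r_{\rec}$ and $r_{\inv}$, and then push that inequality through the defining entropy limits $\lim_{\varepsilon\searrow 0}\limsup_{T\to\infty}\frac{1}{T}\log(\cdot)$. The upper bounds will come from the observation that invariance is a strictly stronger requirement than recurrence, while the lower bounds will rely on the Containment Lemma (Lemma~\ref{lem:containment}) to confine recurrent trajectories to $N_{\delta_\tau}(Q)$, thereby making them invariant for the inflated target set.

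For the upper bounds I would first note that invariance implies recurrence pointwise in time: if $\xi(x,u,\cdot)$ is $(T,\varepsilon,Q)$-invariant then for every $t\in[0,T-\tau]$ the choice $t'=t\in[t,t+\tau]$ already gives $\xi(x,u,t')\in N_{\varepsilon}(Q)$, so $\xi(x,u,\cdot)$ is $(T,\varepsilon,\tau,Q)$-recurrent. Hence every invariance $(T,\varepsilon,Q)$-spanning set (with $K=Q$) is automatically a recurrence $(T,\varepsilon,\tau,Q)$-spanning set, giving $r_{\rec}(T,\varepsilon,\tau,Q)\leq r_{\inv}(T,\varepsilon,Q)$. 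Dividing by $T$, taking $\limsup_{T\to\infty}$ and then $\varepsilon\searrow 0$ yields $h_{\rec}(\tau,Q)\leq h_{\inv}(Q)$; running the identical argument with $\varepsilon=0$ gives $h_{\rec}^*(\tau,Q)\leq h_{\inv}^*(Q)$. This direction I expect to be routine.

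For the lower bounds I would first prove a finite-horizon version of the Containment Lemma: if $\xi(x,u,\cdot)$ is $(T,\tau,Q)$-recurrent then $d(\xi(x,u,t),Q)\leq\delta_\tau$ for all $t\in[0,T]$. The point is that recurrence forces every maximal excursion of the trajectory outside $Q$ beginning at a time $t_1\leq T-\tau$ to return within $\tau$ units, while an excursion beginning at $t_1>T-\tau$ lasts at most $T-t_1<\tau$ inside the horizon; in either case its duration is at most $\tau$, and the Grönwall estimate of Lemma~\ref{lem:containment} bounds the resulting displacement by $F_Q\tau e^{L_\tau\tau}=\delta_\tau$. Such a trajectory therefore stays in $N_{\delta_\tau}(Q)$ on $[0,T]$, i.e., is $(T,N_{\delta_\tau}(Q))$-invariant, so any recurrence $(T,\tau,Q)$-spanning set is an invariance $(T,Q,N_{\delta_\tau}(Q))$-spanning set and $r_{\inv}^*(T,Q,N_{\delta_\tau}(Q))\leq r_{\rec}^*(T,\tau,Q)$; the entropy limits then give $h_{\inv}^*(Q,N_{\delta_\tau}(Q))\leq h_{\rec}^*(\tau,Q)$.

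The non-strict lower bound is where I expect the main obstacle. For $\varepsilon>0$ recurrence is to $N_{\varepsilon}(Q)$ rather than to $Q$, so the containment argument must be run relative to $N_{\varepsilon}(Q)$, producing a bound $\delta_\tau^{\varepsilon}$ (the analogue of $\delta_\tau$ computed with $F_Q,L_\tau$ replaced by their versions over the reachable set of $N_{\varepsilon}(Q)$) satisfying $\delta_\tau^{\varepsilon}\searrow\delta_\tau$ as $\varepsilon\searrow 0$. A $(T,\varepsilon,\tau,Q)$-recurrent trajectory then stays in $N_{\varepsilon+\delta_\tau^{\varepsilon}}(Q)=N_{\varepsilon'(\varepsilon)}\!\left(N_{\delta_\tau}(Q)\right)$ with $\varepsilon'(\varepsilon):=\varepsilon+\delta_\tau^{\varepsilon}-\delta_\tau$, yielding $r_{\inv}(T,\varepsilon'(\varepsilon),Q,N_{\delta_\tau}(Q))\leq r_{\rec}(T,\varepsilon,\tau,Q)$, and since $\varepsilon'(\varepsilon)\to 0$ exactly when $\varepsilon\to 0$ the two $\lim_{\varepsilon\searrow0}$ limits align to give $h_{\inv}(Q,N_{\delta_\tau}(Q))\leq h_{\rec}(\tau,Q)$. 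The hard part will be this accuracy bookkeeping, namely establishing $\varepsilon'(\varepsilon)\to 0$ cleanly and, more delicately, justifying that the Lipschitz/Grönwall estimate is valid along a \emph{finite-horizon} recurrent trajectory that may use controls outside $\mathcal{U}_r$; I expect this to require a bootstrapping argument that keeps the trajectory inside the region on which $L_\tau$ is a legitimate Lipschitz bound.
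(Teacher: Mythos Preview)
Your approach is essentially the same as the paper's: compare spanning sets at each $T$ using (i) ``invariance $\Rightarrow$ recurrence'' for the upper bounds and (ii) the Containment Lemma to turn recurrence into invariance for $N_{\delta_\tau}(Q)$ for the lower bounds, then pass to the entropy limits. The paper's proof is in fact much terser than yours---it simply asserts that any recurrence $(T,\varepsilon,\tau,Q)$-spanning set is an invariance $(T,\varepsilon,Q,N_{\delta_\tau}(Q))$-spanning set and that any invariance $(T,\varepsilon,Q)$-spanning set is a recurrence $(T,\varepsilon,\tau,Q)$-spanning set, and stops there.

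One comparison worth noting: the technical concerns you raise about the non-strict lower bound (the need for a finite-horizon containment estimate, the $\varepsilon$-dependent bound $\delta_\tau^{\varepsilon}$, and the fact that a finite-horizon $\varepsilon$-recurrent trajectory need not use a control in $\mathcal{U}_r$) are not addressed in the paper at all. The paper simply applies the Containment Lemma as if it directly yields $\xi(x,u,[0,T])\subseteq N_{\varepsilon}(N_{\delta_\tau}(Q))$ for any $(T,\varepsilon,\tau,Q)$-recurrent trajectory, without the $\varepsilon'(\varepsilon)$ bookkeeping you outline. So your treatment of this case is more careful than the paper's own argument; the ``hard part'' you anticipate is real, but the paper does not engage with it.
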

\begin{proof}
The first inequality follows from the containment lemma that shows that any recurrence $(T,\varepsilon, \tau, Q)$-spanning (resp. $(T, \tau, Q)$-spanning) set is an invariance $(T, \varepsilon, Q, N_{\delta_\tau}(Q))$-spanning (resp. $(T, Q,$  $N_{\delta_\tau}(Q))$-spanning) set. 
	The second inequality follows from the observation that any invariance $(T,\varepsilon, Q)$-spanning (resp. $(T, Q)$-spanning) set is a recurrence $(T,\varepsilon, \tau,Q)$-spanning (resp. $(T,\tau,Q)$-spanning) set as well, for any $\tau \geq 0$ and $\varepsilon \geq 0$. 
\end{proof}

\begin{corollary}
	As $\tau\rightarrow 0$, $\tau$-recurrence entropy becomes equal to invariance entropy, i.e., $\lim_{\tau \searrow 0} h_{\rec}(\tau,Q) = h_{\mathit{inv}}(Q)$.
\end{corollary}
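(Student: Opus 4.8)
The plan is to combine the two-sided bound from Theorem~\ref{thm:recurrence_lessthan_invariance} with a continuity argument for invariance entropy as the target set collapses onto $Q$. The sandwich reads $h_{\inv}(Q, N_{\delta_\tau}(Q)) \le h_{\rec}(\tau, Q) \le h_{\inv}(Q)$ with $\delta_\tau = F_Q \tau e^{L_\tau \tau}$, and since the upper bound does not depend on $\tau$, the entire statement reduces to proving that the lower bound converges upward, i.e.\ $\lim_{\tau \searrow 0} h_{\inv}(Q, N_{\delta_\tau}(Q)) = h_{\inv}(Q)$; the corollary then follows by squeezing the $\liminf$ and $\limsup$ of $h_{\rec}(\tau, Q)$ between these two quantities.

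First I would record that $\delta_\tau \to 0$ as $\tau \searrow 0$. This requires only that $L_\tau$ stay bounded for small $\tau$, which holds because reachable sets are monotone in the time horizon: for every $\tau \le \tau_0$ we have $R_r(Q,\tau) \subseteq R(Q,\tau) \subseteq R(Q,\tau_0)$, so the quotient in \eqref{eq:Ltau} is maximized over a fixed compact set and is bounded by some $\bar L$; hence $\delta_\tau \le F_Q \tau e^{\bar L \tau_0} \to 0$.

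The key step is an exact reduction of the shrinking-target limit to the already-present $\varepsilon \searrow 0$ limit. Because $N_{\varepsilon}$ is built from a norm, for every $\varepsilon, \delta \ge 0$ one has $N_{\varepsilon}(N_{\delta}(Q)) = N_{\varepsilon+\delta}(Q)$ (one inclusion is the triangle inequality; for the other, given $z$ with $d(z,Q) \le \varepsilon + \delta$ pick the point at distance $\delta$ from the nearest point of $Q$ along the connecting segment). Consequently the $(T,\varepsilon, Q, N_{\delta}(Q))$-invariance condition is literally the $(T, \varepsilon+\delta, Q, Q)$-invariance condition, so $r_{\inv}(T,\varepsilon, Q, N_{\delta}(Q)) = r_{\inv}(T, \varepsilon + \delta, Q, Q)$. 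Setting $g(s) := \limsup_{T\to\infty} \tfrac1T \log r_{\inv}(T, s, Q, Q)$, which is non-increasing in $s$, this gives $h_{\inv}(Q, N_{\delta}(Q)) = \lim_{\varepsilon \searrow 0} g(\varepsilon + \delta) = \sup_{s > \delta} g(s)$.

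Finally, the map $\delta \mapsto \sup_{s>\delta} g(s)$ is non-increasing and $\sup_{\delta>0}\sup_{s>\delta} g(s) = \sup_{s>0} g(s) = \lim_{s\searrow 0} g(s) = h_{\inv}(Q)$, so letting $\delta \searrow 0$ yields $\lim_{\delta \searrow 0} h_{\inv}(Q, N_{\delta}(Q)) = h_{\inv}(Q)$; composing with $\delta_\tau \to 0$ and squeezing closes the argument. I expect the main obstacle to be exactly this continuity step: a priori the lower bound could plateau strictly below $h_{\inv}(Q)$, and the exact identity $N_{\varepsilon}(N_{\delta}(Q)) = N_{\varepsilon+\delta}(Q)$ is what rules this out, since it shows that enlarging the target by $\delta$ is indistinguishable from relaxing the tolerance $\varepsilon$ by $\delta$, so no genuinely new covering estimate is needed.
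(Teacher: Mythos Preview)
Your argument is correct. The paper states this corollary without proof, treating it as an immediate consequence of the sandwich $h_{\inv}(Q, N_{\delta_\tau}(Q)) \le h_{\rec}(\tau, Q) \le h_{\inv}(Q)$ together with $\delta_\tau \to 0$. You have supplied precisely the detail the paper glosses over: that the lower bound actually converges to $h_{\inv}(Q)$ as $\delta_\tau \searrow 0$. Your identity $N_{\varepsilon}(N_{\delta}(Q)) = N_{\varepsilon+\delta}(Q)$, which reduces enlarging the target set to relaxing the tolerance, is the right way to make this continuity transparent and avoids any separate covering estimate; the monotonicity argument bounding $L_\tau$ for small $\tau$ is also a necessary check that the paper omits.
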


\begin{theorem}
	\label{thm:recurrence_lessthan_tau_recurrence}
	For any $Q \subseteq \reals^n$ that is $\tau$-recurrent for some $\tau > 0$, for any $\tau' \geq \tau$, $h_{\rec}(\tau', Q) \leq h_{\rec}(\tau,Q)$ and $h_{\rec}^*(\tau', Q) \leq h_{\rec}^*(\tau,Q)$. 
\end{theorem}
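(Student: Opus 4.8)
The plan is to prove both inequalities through a single monotonicity observation: relaxing the return window from $\tau$ to a larger $\tau'$ can only make the recurrence condition easier to satisfy, so every spanning set that works for the smaller $\tau$ is automatically a spanning set for the larger $\tau'$, which forces the minimal cardinalities—and hence the entropies—to decrease.

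First I would establish the key implication: for any $\tau' \geq \tau$, if a piece-wise continuous trajectory $\xi(x,u,\cdot)$ is $(T,\varepsilon,\tau,Q)$-recurrent then it is also $(T,\varepsilon,\tau',Q)$-recurrent. To see this, fix any $t \in [0, T-\tau']$. Since $\tau' \geq \tau$ gives $T-\tau' \leq T-\tau$, this same $t$ lies in $[0, T-\tau]$, so the $(T,\varepsilon,\tau,Q)$-recurrence of $\xi(x,u,\cdot)$ supplies some $t' \in [t, t+\tau]$ with $\xi(x,u,t') \in N_\varepsilon(Q)$. Because $[t, t+\tau] \subseteq [t, t+\tau']$, the same $t'$ witnesses the $(T,\varepsilon,\tau',Q)$-recurrence condition at $t$; as $t$ was arbitrary in $[0, T-\tau']$, the implication holds. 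The identical argument with $\varepsilon = 0$ gives $(T,\tau,Q)$-recurrence $\Rightarrow$ $(T,\tau',Q)$-recurrence.

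Next I would transfer this to spanning sets. I would first note that since $Q$ is $\tau$-recurrent and $[t, t+\tau] \subseteq [t, t+\tau']$, the set $Q$ is also $\tau'$-recurrent, so that $r_{\rec}(T,\varepsilon,\tau',Q)$ is well-defined. Now let $S \subseteq \mathcal{U}$ be any recurrence $(T,\varepsilon,\tau,Q)$-spanning set. For every $x \in Q$ there is a $u \in S$ making $\xi(x,u,[0,T])$ $(T,\varepsilon,\tau,Q)$-recurrent, hence $(T,\varepsilon,\tau',Q)$-recurrent by the implication above; thus $S$ is also a recurrence $(T,\varepsilon,\tau',Q)$-spanning set. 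Passing to minimal cardinalities yields $r_{\rec}(T,\varepsilon,\tau',Q) \leq r_{\rec}(T,\varepsilon,\tau,Q)$, and the same reasoning with $\varepsilon = 0$ gives $r_{\rec}^*(T,\tau',Q) \leq r_{\rec}^*(T,\tau,Q)$.

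Finally I would apply $\frac{1}{T}\log(\cdot)$, then $\limsup_{T \to \infty}$, then the limit $\varepsilon \searrow 0$ to both sides of the cardinality inequalities—each operation preserving the inequality—to conclude $h_{\rec}(\tau',Q) \leq h_{\rec}(\tau,Q)$ and $h_{\rec}^*(\tau',Q) \leq h_{\rec}^*(\tau,Q)$. I do not expect a serious obstacle; the only point requiring care is the bookkeeping of the domain of $t$, namely that enlarging $\tau$ to $\tau'$ simultaneously shrinks the range $[0, T-\tau']$ over which the condition must hold and enlarges each return window $[t, t+\tau']$, and that \emph{both} effects push in the direction of weakening the condition, so that the spanning-set inclusion goes the right way.
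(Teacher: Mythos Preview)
Your proposal is correct and follows exactly the same approach as the paper: the paper's proof consists of the single observation that any $(T,\varepsilon,\tau,Q)$-spanning set is also a $(T,\varepsilon,\tau',Q)$-spanning set (and likewise for the strict version), which is precisely the spanning-set inclusion you establish. Your write-up simply unpacks the trajectory-level implication and the limit-taking that the paper leaves implicit.
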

\begin{proof}
	The result follows from the observation that any $(T,\varepsilon,\tau,Q)$-spanning set (resp. $(T,\tau,Q)$-spanning set) is a $(T,\varepsilon, \tau', Q)$-spanning one (resp. $(T,\tau',Q)$-spanning) as well. 
\end{proof}


	

Though Theorem \ref{thm:recurrence_lessthan_invariance} only provides a non-strict statement, it is important to notice that it only requires $Q$ to be controlled $\tau$-recurrent. As a result, it is certainly possible to have scenarios wherein
\begin{equation}
    h_{\rec}(Q,\tau)<h_{\inv}(Q)=\infty
\end{equation}
which further emphasizes the fact that achieving $\tau$-recurrence is less demanding than achieving invariance. We will show such an example in the next section.

\section{Recurrence entropy bounds}
\label{sec:bounds}

In this section, we present an upper and a lower bound on $\tau$-recurrence entropy. We show that when $\tau = 0$, we recover the upper bound on invariance entropy presented in \citep{colonius-kawan-2009}. 

\begin{theorem}[Upper bound]\label{thm:loose upper bound}
For any $\tau$-recurrent set $Q \subseteq \reals^n$ and any $\tau' \geq \tau$, $h_{\rec}(\tau', Q) \leq L_\tau \dim_F(Q) / \ln 2 \leq L_\tau n / \ln 2$, where $\dim_F(Q) := \limsup_{\varepsilon \searrow 0}\frac{\ln b(\delta,Q)}{\ln (1/\delta)}$ and $b(\delta,Q)$ is the minimal cardinality of a $\delta$-cover of $Q$.  
\end{theorem}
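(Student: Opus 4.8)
The plan is to prove the bound by explicitly constructing a small recurrence spanning set, in the spirit of the invariance-entropy upper bound of \citep{colonius-kawan-2009}, but exploiting that we only need nearby trajectories to be \emph{recurrent} rather than invariant. The central tool is the standard divergence estimate: if $\xi(c,u,\cdot)$ is a reference trajectory staying in $\text{cl}(\mathrm{co}(R_r(Q,\tau)))$ and $x$ is nearby, then Grönwall's inequality together with the definition of $L_\tau$ in \eqref{eq:Ltau} gives $\|\xi(x,u,t)-\xi(c,u,t)\| \le \|x-c\|\, e^{L_\tau t}$ for as long as both trajectories remain in that set.

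First I would fix $\varepsilon>0$ and a horizon $T$, and set the resolution $\rho := \varepsilon\, e^{-L_\tau T}$. Using that $Q$ is controlled $\tau$-recurrent, I would take a $\rho$-cover $\{c_1,\dots,c_N\}\subseteq Q$ of $Q$ with $N = b(\rho,Q)$ and, for each center $c_j$, select a control $u_j \in \mathcal{U}_r$ for which $\xi(c_j,u_j,\cdot)$ is $(\tau,Q)$-recurrent for all time. \textbf{It is essential to use $(\tau,Q)$-recurrent references rather than $(\tau',Q)$-recurrent ones}: a $(\tau,Q)$-recurrent trajectory is automatically $(\tau',Q)$-recurrent for every $\tau'\ge\tau$, it is confined to $N_{\delta_\tau}(Q)$ by the Containment Lemma, and its driving control lies in $\mathcal{U}_r$, so the constant in the divergence estimate is exactly $L_\tau$ and not the larger $L_{\tau'}$. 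I claim $S=\{u_1,\dots,u_N\}$ is a recurrence $(T,\varepsilon,\tau',Q)$-spanning set. Indeed, for any $x\in Q$ pick $j$ with $x\in B(c_j,\rho)$; then for all $t\in[0,T]$ the divergence estimate yields $\|\xi(x,u_j,t)-\xi(c_j,u_j,t)\|\le \rho\,e^{L_\tau T}=\varepsilon$. Since the reference is $(\tau',Q)$-recurrent, at each of its return times $t'$ we have $\xi(c_j,u_j,t')\in Q$, hence $\xi(x,u_j,t')\in N_\varepsilon(Q)$; as the return times coincide for both trajectories, $\xi(x,u_j,[0,T])$ is $(T,\varepsilon,\tau',Q)$-recurrent.

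Consequently $r_{\rec}(T,\varepsilon,\tau',Q)\le b(\varepsilon\,e^{-L_\tau T},Q)$, and it remains to pass to the entropy. For any $\eta>0$, the definition of $\dim_F(Q)$ gives $\ln b(\delta,Q)\le (\dim_F(Q)+\eta)\ln(1/\delta)$ for all small enough $\delta$; applying this with $\delta=\varepsilon e^{-L_\tau T}\to 0$ and converting to base-$2$ logarithms,
\begin{equation*}
\frac{1}{T}\log r_{\rec}(T,\varepsilon,\tau',Q) \le \frac{(\dim_F(Q)+\eta)\big(L_\tau T-\ln\varepsilon\big)}{T\,\ln 2}.
\end{equation*}
Taking $\limsup_{T\to\infty}$ kills the $\ln\varepsilon$ term, then $\lim_{\varepsilon\searrow0}$ leaves no $\varepsilon$ dependence, and since $\eta>0$ is arbitrary we obtain $h_{\rec}(\tau',Q)\le L_\tau \dim_F(Q)/\ln 2$. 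The second inequality is immediate from $\dim_F(Q)\le n$ for bounded $Q\subset\reals^n$. As a sanity check, $\tau=0$ makes $Q$ invariant, $R_r(Q,0)=Q$, and the statement reduces to the invariance-entropy upper bound of \citep{colonius-kawan-2009}.

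The step I expect to be the main obstacle is justifying that the divergence estimate may be run with the constant $L_\tau$: while each reference trajectory lives in $\text{cl}(\mathrm{co}(R_r(Q,\tau)))$ (which requires the auxiliary observation that time-shifts of $(\tau,Q)$-recurrent trajectories based in $Q$ remain in $\mathcal{U}_r$), the perturbed trajectory $\xi(x,u_j,\cdot)$ only lies within $\varepsilon$ of it and may leave the convex hull. I would handle this by replacing $L_\tau$ with the Lipschitz constant $L_\tau^{(\varepsilon)}$ of $f(\cdot,u)$ on the compact enlargement $N_\varepsilon(\text{cl}(\mathrm{co}(R_r(Q,\tau))))$; local Lipschitzness of $f$ makes this finite, and $L_\tau^{(\varepsilon)}\searrow L_\tau$ as $\varepsilon\searrow0$, so the outer limit $\lim_{\varepsilon\searrow0}$ in the entropy restores the constant $L_\tau$ exactly. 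The remaining bookkeeping, namely choosing cover centers inside $Q$ without affecting the box-counting exponent, is routine.
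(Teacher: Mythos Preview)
Your proposal is correct and follows essentially the same approach as the paper: construct a $(T,\varepsilon,\tau',Q)$-spanning set from $(\tau,Q)$-recurrent reference trajectories at the centers of a fine cover of $Q$, use Grönwall to bound divergence, and pass to the fractal-dimension limit. In particular, the ``main obstacle'' you identify and your proposed fix---replacing $L_\tau$ by the Lipschitz constant $L_\tau^{(\varepsilon)}$ on $N_\varepsilon(\text{cl}(\mathrm{co}(R_r(Q,\tau))))$ and letting $\varepsilon\searrow0$---is exactly what the paper does (it calls this constant $L_{\tau,\varepsilon}$ and builds it into the cover radius from the start).
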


\begin{proof}
The proof follows that of 
Theorem 4.2 in \citep{colonius-kawan-2009}. Fix any $T$, $\varepsilon$, and $\tau' \geq \tau$. We define 
\begin{equation}\label{eq:Ltau_epsilon}
	L_{\tau, \varepsilon} =  \max_{x_1,x_2\in N_\varepsilon(\text{cl}({\mathrm{co}(R_r(Q,\tau))})),u\in U}\!\!\!\!\!\!\frac{\|f(x_1,u) - f(x_2,u)\|}{\|x_1 - x_2\|} \,<\infty.    
\end{equation}

Let $C$ be a minimal $\varepsilon e^{-L_{\tau, \varepsilon} T}$-cover of $Q$. Since $Q$ is $\tau$-recurrent, then there exists a set $S = \{u_i\}_{i \in [|C|]}$ such that $\xi(x_i, u_i, [0,T])$ is a $(T,\tau,Q)$-recurrent trajectory, where $x_i$ is the $i^{\mathit{th}}$  center  in the cover.  Using the containment lemma (i.e, Lemma~\ref{lem:containment}), we get that $\text{sup}_{t \in \reals_{\geq 0}}d(\xi(x_i, u_i, t), Q) \leq F_Q\tau e^{L_{\tau}\tau}$.


Using Grönwall's inequality,  $\forall t \in [0, T]$ and $\forall x \in B(x_i,\varepsilon e^{-L_{\tau,\varepsilon} T}) \cap Q$, $\|\xi(x_i,u_i,t) - \xi(x,u_i,t)\| \leq e^{L_{\tau, \varepsilon} t}\|x_i - x \| \leq e^{L_{\tau, \varepsilon} t}(\varepsilon e^{-L_{\tau, \varepsilon} T}) \leq \varepsilon$. Consequently, $\xi(x,u_i,t)$ is a $(T,\varepsilon,\tau,Q)$-recurrent trajectory and $S$ is a recurrence $(T,\varepsilon,\tau',Q)$-spanning set, for any $\tau' \geq \tau$. Thus, $r_{\rec}(T, \varepsilon, \tau', Q)  \leq b(\varepsilon e^{-L_{\tau, \varepsilon} T}, Q) = |S|$. Now that we have an upper bound on the minimal cardinality of a $(T,\varepsilon,\tau',Q)$-spanning set, we can get the upper bound on recurrence entropy by substituting it in equation (\ref{def:tau_rec_entropy}). Formally,
\begin{align}
	h_{\rec}(\tau',Q) &= \lim_{\varepsilon\searrow 0}\limsup_{T\rightarrow \infty} \frac{1}{T} \log r_{\rec}(T,\varepsilon, \tau', Q) \nonumber\\
	&\leq \lim_{\varepsilon\searrow 0}\limsup_{T\rightarrow \infty} \frac{1}{T} \log r_{\rec}(T,\varepsilon, \tau, Q) \nonumber\\
	&\leq \lim_{\varepsilon\searrow 0} \limsup_{T\rightarrow \infty} \frac{1}{T} \log  b(\varepsilon e^{-L_{\tau, \varepsilon} T}, Q) \nonumber \\
	&\leq  \lim_{\varepsilon\searrow 0}  \limsup_{T\rightarrow \infty} \frac{L_{\tau, \varepsilon}}{\ln (e^{L_{\tau, \varepsilon}T}/\varepsilon) + \ln \varepsilon} \log  b(\varepsilon e^{-L_{\tau, \varepsilon} T}, Q) \nonumber \\
	&=  \lim_{\varepsilon\searrow 0} L_{\tau, \varepsilon} \limsup_{T\rightarrow \infty} \frac{1}{\ln (e^{L_{\tau, \varepsilon} T}/\varepsilon)} \log  b(\varepsilon e^{-L_{\tau, \varepsilon} T}, Q) \nonumber \\
	&=   \lim_{\varepsilon\searrow 0} L_{\tau, \varepsilon} \limsup_{\delta\searrow 0} \frac{\ln  b(\delta, Q)}{\ln 2 \ln (1/\delta)}  \nonumber \\
	&= L_\tau \text{dim}_F(Q) / \ln 2.
\end{align}

The first inequality follows from the fact that any $(T, \varepsilon, \tau,Q)$-spanning set is a $(T, \varepsilon, \tau',Q)$-spanning one when $\tau \leq \tau'$. The second inequality follows from $S$ constructed earlier being a $(T, \varepsilon, \tau,Q)$-spanning set with cardinality $b(\varepsilon e^{-L_{\tau, \varepsilon} T}, Q)$. The third inequality follows from multiplying the numerator and denominator with $L_{\tau, \varepsilon}$ and using the fact that $\ln (e^{L_{\tau, \varepsilon}T}/\varepsilon) + \ln \varepsilon = L_{\tau, \varepsilon}T$. The equality after that follows from the $\limsup$ being unaffected by $\ln \varepsilon$ in the denominator and $L_{\tau, \varepsilon}$ being independent of $T$. The one before the last equality follows from replacing $\varepsilon e^{-L_{\tau, \varepsilon} T}$ with $\delta$, which transform $\limsup_{T\rightarrow \infty}$ to $\limsup_{\delta \searrow 0}$ as well as the fact that $\log c = \ln c/\ln 2$. The last equality follows from substituting the definition of $\dim_F(Q)$ and $\lim_{\varepsilon \searrow 0}L_{\tau, \varepsilon}$ by its value $L_\tau$.  
\end{proof}

\begin{remark}
Setting $\tau$ to zero makes $R_r(Q,\tau)$ as defined in (\ref{eq:R_r}) equal to $Q$ and in the definition of $L_\tau$ in (\ref{eq:Ltau}), the domain of the maximum would be cl(co($Q$)). 
Assuming $Q$ is already convex, substituting this $L_\tau$ in the bound in Theorem~\ref{thm:loose upper bound}  results in the same upper-bound as that on invariance entropy in Theorem 4.2 in \citep{colonius-kawan-2009}. 
\end{remark}

\begin{remark}
Theorem~\ref{thm:loose upper bound} shows that if the system is capable of achieving faster recurrence to $Q$ than required, i.e., achieving $\tau$-recurrence while the requirement is $\tau'$-recurrence for some $\tau' > \tau$, then we can obtain a tighter upper bound on recurrence entropy since $L_\tau \leq L_{\tau'}$. 
\end{remark}

 \begin{example}[Illustrative Example]
 	\vspace{.5ex}
 	Consider the case following two-dimensional linear system
 	\begin{align}
  \label{eq:example_system}
 		\begin{bmatrix}
 			\dot x_1 \\ \dot x_2
 		\end{bmatrix}=
 		\begin{bmatrix}
 			0 & 1\\ 0 & 0 
 		\end{bmatrix}
 		\begin{bmatrix}
 			x_1 \\ x_2
 		\end{bmatrix}+
 		\begin{bmatrix}
 			0 \\ 1
 		\end{bmatrix} u
 	\end{align}
 	We assume $u\in U=[-1,1]$, and consider the set $Q=[-1,1]^2$.

  Observe that with simple integration, we can get the closed form solution as follows: 
  \begin{equation*}
 		\xi(x,u,t)
 		=\begin{bmatrix}
 			x_1(t)\\x_2(t)
 		\end{bmatrix}
 		=\begin{bmatrix}
 			\int_{s_2 = 0}^t\int_{s_1 = 0}^t u(s_1) ds_1 d s_2 + x_2(0)t + x_1(0)\\
 			\int_{s = 0}^t u(s) ds  + x_2(0)
 		\end{bmatrix}.
 	\end{equation*}
Consider the case when $x = [1,1]$. Then, for the trajectory starting at $x$ to not leave $Q$, the control signal should be chosen so that neither of the two coordinates increase. {Both coordinates are monotonically increasing in $u$.} If we choose the control signal to have the minimum value $-1$ for some interval $[0,T]$ in the effort of preventing the state coordinates from increasing and escaping $Q$, then $\xi(
x,u,t') = [1 + t - \frac{1}{2} t^2, -t + 1]$. Thus, for all $t \leq 2$, $\xi(
x,u,t') \notin Q$. Therefore, there is no piece-wise continuous control signal that can make the trajectory starting from $x$ invariant to $Q$ or even $\tau$-recurrent with $\tau < 2$, and $Q$ is not controlled invariant or $\tau$-recurrent with any $\tau < 2$. Thus, the invariance entropy and $\tau$-recurrence entropy of system~(\ref{eq:example_system}) $h_\inv(Q)$ and $h_\rec(Q, \tau)$ are infinite for $\tau < 2$. 

In contrast, observe that with constant control signals with values in $U$, any trajectory with an initial state in $Q$ can be driven back to $Q$ within $2$ time units. Thus, $Q$ is controlled $2$-recurrent and we can use the upper bound of Theorem~\ref{thm:loose upper bound}. If we choose the $\infty$-norm, then $L_\tau \leq \|\frac{\partial f}{\partial x}\| = \|A\| = 1$. 	
  	It therefore follows that 
 	\[
 	h_{inv}(Q) = +\infty \quad\text{ and }\quad h_{rec}(Q,\tau)=
 	\begin{cases}
 		+\infty & \tau< 2\\
 		\leq 2/\ln 2 & \tau\geq 2
 	\end{cases}.
 	\]
\end{example}


\begin{theorem}[Lower bound]
\label{thm:lower_bound}
For any $\tau$-recurrent set $Q \subseteq \reals^n$, 
$$h_\rec(\tau,Q) \geq \frac{1}{\ln 2 }\max\left\{0,\min_{(x,u) \in \text{cl}(N_{\delta_\tau}(Q)) \times U} \text{div}_xf(x,u) \right\},$$ where $\text{div}_xf(x,u) = \sum_{i =1}^{n}\frac{\partial f_i}{\partial x_i}(x,u) = \text{tr}\frac{\partial f}{\partial x}(x,u)$.
\end{theorem}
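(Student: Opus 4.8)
The plan is to establish the bound through a volume-comparison (Liouville) argument, adapting the lower-bound proof for invariance entropy in \citep{colonius-kawan-2009} to the recurrent setting. The essential new ingredient is that, for recurrence, trajectories are \emph{not} confined to $N_{\varepsilon}(Q)$ but may wander; the Containment Lemma (Lemma~\ref{lem:containment}) is precisely what supplies a fixed compact region, namely $\text{cl}(N_{\delta_\tau}(Q))$, to which $(\tau,Q)$-recurrent trajectories are confined. Throughout I assume $f$ is $C^1$ in $x$, so that $\text{div}_x f$ is defined and the time-$T$ flow map is a diffeomorphism.

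First I would fix $\varepsilon > 0$ and $T \geq \tau$, take a minimal recurrence $(T,\varepsilon,\tau,Q)$-spanning set $S = \{u_1,\dots,u_k\}$ with $k = r_{\rec}(T,\varepsilon,\tau,Q)$, and partition $Q$ by control: set $D_i := \{x \in Q : \xi(x,u_i,\cdot) \text{ is } (T,\varepsilon,\tau,Q)\text{-recurrent}\}$, so that $\bigcup_i D_i = Q$ and hence $\sum_i \lambda_n(D_i) \geq \lambda_n(Q)$ by subadditivity. An $\varepsilon$-relaxed version of the Containment Lemma shows that each such trajectory stays within distance $\delta_{\tau,\varepsilon}$ of $Q$ on $[0,T]$, where $\delta_{\tau,\varepsilon} \searrow \delta_\tau$ as $\varepsilon \searrow 0$ (the last forced return to $N_{\varepsilon}(Q)$ occurs within $\tau$ of $T$, so the estimate extends to time $T$). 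Writing $\phi^i_t(x) := \xi(x,u_i,t)$, this yields $\lambda_n(\phi^i_T(D_i)) \leq \lambda_n(N_{\delta_{\tau,\varepsilon}}(Q))$.

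The core step is Liouville's formula: the Jacobian determinant of $\phi^i_T$ equals $\exp\big(\int_0^T \text{div}_x f(\phi^i_s(x),u_i(s))\,ds\big)$. Since every $\phi^i_s(x)$ lies in $\text{cl}(N_{\delta_{\tau,\varepsilon}}(Q))$ and $u_i(s) \in U$, the integrand is at least $\underline{d}_{\tau,\varepsilon} := \min_{(y,u)\in \text{cl}(N_{\delta_{\tau,\varepsilon}}(Q))\times U} \text{div}_x f(y,u)$, so change of variables gives $\lambda_n(\phi^i_T(D_i)) \geq e^{\underline{d}_{\tau,\varepsilon} T}\lambda_n(D_i)$. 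Combining this with the confinement bound and summing over $i$ produces $r_{\rec}(T,\varepsilon,\tau,Q) \geq \lambda_n(Q)\,\lambda_n(N_{\delta_{\tau,\varepsilon}}(Q))^{-1} e^{\underline{d}_{\tau,\varepsilon} T}$. Taking logarithms, dividing by $T$, and letting $T \to \infty$ annihilates the $O(1/T)$ prefactor and leaves $\limsup_{T\to\infty} \tfrac{1}{T}\log r_{\rec}(T,\varepsilon,\tau,Q) \geq \underline{d}_{\tau,\varepsilon}/\ln 2$. Letting $\varepsilon \searrow 0$ and invoking continuity of the divergence together with $\delta_{\tau,\varepsilon}\to\delta_\tau$ gives $h_{\rec}(\tau,Q) \geq \underline{d}_\tau/\ln 2$, with the minimum taken over $\text{cl}(N_{\delta_\tau}(Q))\times U$; since $r_{\rec}\geq 1$ always forces $h_{\rec}\geq 0$, the bound may be improved to $\max\{0,\cdot\}$, as claimed.

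I expect the main obstacle to be the confinement step rather than the (standard) Liouville computation: one must verify that $(T,\varepsilon,\tau,Q)$-recurrent trajectories remain in $\text{cl}(N_{\delta_{\tau,\varepsilon}}(Q))$ over the whole finite horizon, and then control how $\delta_{\tau,\varepsilon}$ and the minimized divergence $\underline{d}_{\tau,\varepsilon}$ behave as $\varepsilon\searrow 0$ so that the interchange of the $\varepsilon$- and $T$-limits is legitimate and the domain of the minimum converges to $\text{cl}(N_{\delta_\tau}(Q))$. Secondary care is needed for the measurability of the sets $D_i$ and for the $C^1$ regularity of $f$ implicit in using $\text{div}_x f$ and the change-of-variables formula.
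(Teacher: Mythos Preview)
Your proposal is correct and follows essentially the same approach as the paper: partition $Q$ according to which control from a minimal spanning set renders the trajectory recurrent, use the Containment Lemma to confine the time-$T$ images to $N_{\delta_\tau+\varepsilon}(Q)$, apply Liouville's formula to lower-bound the volume expansion by the minimum divergence over that region, and pass to the limits. The only cosmetic difference is that you sum $\lambda_n(D_i)\geq \lambda_n(Q)$ whereas the paper uses $\lambda_n(Q)\leq M\max_j\lambda_n(Q_j)$; both yield the same inequality on $r_{\rec}$.
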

\begin{proof}
A small modification of the proof of Theorem 4.1 in \citep{colonius-kawan-2009} would result in the theorem. The modified proof is as follows: first, fix $T, \varepsilon \geq 0$ and let $S = \{u_j\}_{j\in [M]}$ be a minimal  recurrence $(T, \varepsilon, \tau, Q)$-spanning set. Let us define the following sets: for any $j \in [M]$, 
\begin{align}
	Q_j = \{ x\in Q\ |\ \xi(x, u_j, [0,T]) \text{ is } (T, \varepsilon, \tau, Q)\text{-recurrent}\}.
\end{align}
For simplicity of notation, we define $\xi(Q_j, u_j,T) := \cup_{x \in Q_j}\xi(x, u_j, T)$.
Then, by the Containment lemma (Lemma~\ref{lem:containment}), {$\lambda_n(\xi(Q_j,u_j,T)) \leq \lambda_n(N_{\delta_\tau + \varepsilon}(Q))$}.
Note that in the case of invariance (as when $\tau = 0$), we instead have $\lambda_n(\xi(Q_j,u_j,T)) \leq \lambda_n(N_{\varepsilon}(Q))$, as shown in \citep{colonius-kawan-2009}. 

Now,  we can use the transformation theorem and Liouville's trace formula to get: 
\begin{align*}
	&\lambda_n(\xi(Q_j,u_j,T))= \int_{Q_j} |\text{det}\frac{\partial \xi}{\partial x} (x,u_j, T)| dx \\
	&\geq \lambda_n(Q_j) \inf_{\substack{(x,u) \in Q \times \mathcal{U}_r, \\ \xi(x,u,[0,T]) \subseteq N_{\delta_\tau + \varepsilon}(Q)}} |\text{det} \frac{\partial \xi}{\partial x}(x,u, T)| \\
	&=\lambda_n(Q_j) \inf_{\substack{(x,u) \in Q \times \mathcal{U}_r, \\ \xi(x,u,[0,T]) \subseteq N_{\delta_\tau + \varepsilon}(Q)}}\exp\big( \int_{0}^T \text{div}_x f(\xi(x,u,s),u(s)) ds\big) \\
	& \geq \lambda_n(Q_j) \min_{(x,u) \in \text{cl}(N_{\delta_\tau + \varepsilon}(Q)) \times U} \exp\big( T \text{div}_x f(x,u)\big).
\end{align*}
Now since $\lambda_n(Q) \leq M \max_{j \in [M]} \lambda_n(Q_j)$, 
\begin{align*}
	\lambda_n(Q) &\leq M \frac{\max_{j \in [M]} \lambda_n(\xi(Q_j,u_j,T))}{\min_{(x,u) \in \text{cl}(N_{\delta_\tau + \varepsilon}(Q)) \times U} \exp\big( T \text{div}_x f(x,u)\big)} \\
	&\leq M \frac{\lambda_n(N_{\delta_\tau + \varepsilon}(Q))}{\min_{(x,u) \in \text{cl}(N_{\delta_\tau + \varepsilon}(Q)) \times U} \exp\big( T \text{div}_x f(x,u)\big)}.
\end{align*}
Consequently,
\begin{align*}
	M \geq \frac{\lambda_n(Q)}{\lambda_n(N_{\delta_\tau + \varepsilon}(Q))} \min_{(x,u) \in \text{cl}(N_{\delta_\tau + \varepsilon}(Q)) \times U} \exp\big( T \text{div}_x f(x,u)\big).
\end{align*}
Recall that $M$ here is equal to $r_\rec(T,\varepsilon,\tau,Q)$. Thus, since $\tau$ is finite, $\delta_\tau$ is finite and $h_{\rec}(\tau,Q) $ 
\begin{align*}
	&\geq \lim_{\varepsilon \searrow 0} \limsup_{T\rightarrow \infty} \frac{1}{T}  \min_{(x,u) \in \text{cl}(N_{\delta_\tau + \varepsilon}(Q)) \times U} \log \exp\big( T \text{div}_x f(x,u)\big) \\
	&=  \lim_{\varepsilon \searrow 0} \frac{1}{\ln 2}\min_{(x,u) \in \text{cl}(N_{\delta_\tau + \varepsilon}(Q)) \times U} \text{div}_x f(x,u) \\
	&= \frac{1}{\ln 2}\min_{(x,u) \in \text{cl}(N_{\delta_\tau}(Q)) \times U} \text{div}_x f(x,u).
\end{align*}
Note that $\delta_\tau$ strictly increases with $\tau$. Thus, with a larger $\tau$, the domain over which the minimum is taken in the lower bound becomes larger, and the minimum itself becomes smaller. This is expected since $h(\tau',Q) \leq h(\tau, Q)$ if $\tau' \geq \tau$, according to Theorem~\ref{thm:recurrence_lessthan_tau_recurrence}. Also,  as $\tau \rightarrow 0$, we get the same lower bound as invariance entropy presented in \citep{colonius-kawan-2009}. 
\end{proof}

\begin{remark}
Theorem~\ref{thm:lower_bound} does not follow directly from Theorem 4.1 in \citep{colonius-kawan-2009}, i.e., from the result that $h_{\inv}(Q)$ is greater than or equal to $\max\{0,\min_{(x,u) \in Q \times U}\text{div}_xf(x,u)\}$, since $h_{\rec}(\tau, Q) \leq h_{\inv}(Q)$, for any $\tau \geq 0$.  
\end{remark}

\section{Entropy and $\tau$-Recurrence data rates}\label{sec:data rates}
We assume the setup where there is a sensor that can accurately measure the state of system~(\ref{eq:system}) at any time instant. It also has computation capabilities that allows it to simulate the system starting from any initial state and following any control, as long as that trajectory exists. The sensor is connected to a controller over a limited-bandwidth channel. The controller does not have information about the state of the system besides what it receives from the sensor. It does however know the $\tau$-recurrent set $Q \subseteq \reals^n$, the corresponding control signal that drives the system when starting from any state in $Q$ to have an $( \varepsilon,\tau, Q)$-recurrent trajectory.

An $(\varepsilon,\tau,Q)$-{\em recurrence enforcing} algorithm is a pair of procedures, one for the sensor and the other for the controller. The sensor's procedure determines the bits it sends over the channel to the controller. Based on these bits, the controller's procedure determines how to map these bits to a control signal to drive the system to have an $( \varepsilon,\tau, Q)$-recurrent trajectory.  The average bit rate of an $(\varepsilon,\tau,Q)$-recurrence enforcing algorithm is defined as follows: $\lim_{T\rightarrow \infty} \frac{\# \text{bits}(T)}{T}$, where $\# \text{bits}(T)$ is the total number of bits sent by the sensor until time $T$. 

\begin{theorem}
For any controlled $\tau$-recurrent set $Q \subseteq \reals^n$ and $\varepsilon \geq 0$, there exists no  $(\varepsilon,\tau,Q)$-recurrence enforcing algorithm with an average bit rate smaller than $h_\rec(\tau,Q)$. 
\end{theorem}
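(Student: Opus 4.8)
The plan is to show that any recurrence-enforcing algorithm, run up to a finite horizon $T$, induces a recurrence $(T,\varepsilon,\tau,Q)$-spanning set whose cardinality is controlled by the number of bits the sensor transmits, and then to push this bound through the definition of $h_\rec(\tau,Q)$ in \eqref{def:tau_rec_entropy}.

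First I would fix $\varepsilon \geq 0$ and a horizon $T$, and set $N(T) := \sup_{x \in Q} \#\text{bits}(T)$, the worst-case number of bits transmitted up to time $T$ over all initial states in $Q$. The crucial observation is the causality and determinism of the controller: the control signal it applies on $[0,T]$ is a deterministic function of the finite prefix of bits it has received by time $T$. Hence the number of distinct control signals the controller can possibly produce on $[0,T]$ is at most the number of distinct binary strings of length at most $N(T)$, i.e.\ at most $2^{N(T)+1}$. Collecting, for each initial state $x \in Q$, the control signal the algorithm generates, I obtain a finite set $S_T \subseteq \mathcal{U}$ with $|S_T| \leq 2^{N(T)+1}$.

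Next I would verify that $S_T$ is a recurrence $(T,\varepsilon,\tau,Q)$-spanning set. This is immediate from the defining property of the algorithm: for every $x \in Q$, the control signal it produces drives $\xi(x,\cdot,\cdot)$ to be $(\varepsilon,\tau,Q)$-recurrent, hence in particular $(T,\varepsilon,\tau,Q)$-recurrent on the truncated horizon $[0,T]$. Therefore $r_\rec(T,\varepsilon,\tau,Q) \leq |S_T| \leq 2^{N(T)+1}$. Taking $\tfrac{1}{T}\log$ of both sides gives $\tfrac{1}{T}\log r_\rec(T,\varepsilon,\tau,Q) \leq \tfrac{N(T)+1}{T}$, and then applying $\limsup_{T \to \infty}$ followed by $\lim_{\varepsilon \searrow 0}$ yields
\[
h_\rec(\tau,Q) \leq \limsup_{T \to \infty}\frac{N(T)+1}{T} = \limsup_{T \to \infty}\frac{N(T)}{T},
\]
which is exactly the (worst-case) average bit rate of the algorithm. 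Consequently no recurrence-enforcing algorithm can have an average bit rate strictly below $h_\rec(\tau,Q)$.

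The main obstacle is making the counting step rigorous: I need the control-signal-to-bits correspondence to be well defined, which requires the controller's map to be causal and the transmission protocol to render the received prefix unambiguous (e.g.\ prefix-free, or synchronized by the common time clock), so that distinct control outputs genuinely require distinct bit strings. A secondary point is the passage to the worst-case count $N(T)$: since $\#\text{bits}(T)$ may depend on the initial state, I must take the supremum over $x \in Q$ to bound $|S_T|$ uniformly, and interpret the ``average bit rate of the algorithm'' as this worst-case quantity. The additive ``$+1$'' and the constant factor in $2^{N(T)+1}$ are harmless, since they vanish after dividing by $T$ and letting $T \to \infty$.
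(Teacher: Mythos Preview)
Your proposal is correct and follows essentially the same argument as the paper: count the distinct control signals the controller can output on $[0,T]$ by the number of bit strings it can receive, observe that this collection is a recurrence $(T,\varepsilon,\tau,Q)$-spanning set, and compare with $r_\rec(T,\varepsilon,\tau,Q)$. The paper packages this as a proof by contradiction, whereas you proceed directly; you are also more explicit than the paper about causality of the controller, taking the worst case $N(T)=\sup_{x\in Q}\#\text{bits}(T)$, and the extra factor from variable-length prefixes, none of which the paper spells out.

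One caveat worth flagging: your final step ``apply $\lim_{\varepsilon\searrow 0}$'' is delicate, because the algorithm---and hence $N(T)$---is fixed for a \emph{given} $\varepsilon$, so the right-hand side is not being varied as $\varepsilon\searrow 0$. What your counting argument (and the paper's) actually yields is
\[
\limsup_{T\to\infty}\frac{1}{T}\log r_\rec(T,\varepsilon,\tau,Q)\ \leq\ \limsup_{T\to\infty}\frac{N(T)}{T}
\]
for the fixed $\varepsilon$ of the algorithm. The paper's contradiction is written in terms of $r_\rec(T,\varepsilon,\tau,Q)$ for that same fixed $\varepsilon$ and then identified with $h_\rec(\tau,Q)$ without further comment, so you are matching the paper's level of rigor here; but it is worth being aware that the passage from the $\varepsilon$-level bound to $h_\rec(\tau,Q)$ is where any remaining subtlety lives.
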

\begin{proof}
The proof is by contradiction. If there is such an algorithm with an average data rate smaller than entropy, then there exists a $T > 0$ such that 
\begin{align}
	\frac{\# \text{bits}(T)}{T} &< \frac{1}{T} \log r_{\rec}(T,\varepsilon,\tau,Q).
\end{align}
That implies that $2^{\# \text{bits}(T)} < r_{\rec}(T,\varepsilon,\tau,Q)$. Observe that $2^{\# \text{bits}(T)} $ is the number of control signals that the controller can possibly generate over the interval $[0,T]$. By the assumption that the controller enforces the system to have an $(T,\varepsilon,\tau,Q)$-recurrent trajectory, then for every $x \in Q$, it can generate a control signal that results in a $(T,\varepsilon, \tau, Q)$-recurrent trajectory. Therefore, the set of control signals that the controller can generate is a $(T,\varepsilon,\tau, Q)$-spanning one that has a smaller cardinality than $r_{\rec}(T,\varepsilon, \tau, Q)$, which contradicts the latter's definition being the minimal cardinality of a $(T,\varepsilon,\tau, Q)$-spanning set. 
\end{proof}

\section{Algorithm for enforcing $\tau$-recurrence over limited-bandwidth channels}\label{sec:algorithm}

In this section, we present Algorithm~\ref{code:recurrence}, which when run at the sensor and a corresponding procedure running at the controller, it can produce a control signal for system~(\ref{eq:system}) that drives its trajectory to be exponentially converging to a $(\tau, Q)$-recurrent one at a user-specified rate $\alpha \geq 0$. When $\alpha = 0$, the trajectory would be an $(\varepsilon,\tau,Q)$-recurrent trajectory with a user-specified $\varepsilon$. We define this more formally in Theorem~\ref{thm:algorithm_guarantees} and Corollary~\ref{cor:asymptotic_recurrence}. After that, we show that the bit rate at which the sensor should send information to the controller is equal to the upper bound on $\tau$-recurrence entropy presented in Theorem~\ref{thm:loose upper bound} when $\alpha = 0$, and grows linearly with $\alpha$, otherwise. 

In our algorithm, we assume that starting from any state in $N_{\delta_\tau + \varepsilon}(Q)$, for some $\varepsilon > 0$, there exists a control signal that drives system~(\ref{eq:system}) to $Q$ within $\tau$ time units. Moreover, we assume that the function 
that maps the initial states $N_{\delta_\tau + \varepsilon}(Q)$ to the shortest time such a control signal takes to drive system~(\ref{eq:system})  to $Q$ to be Lipschitz continuous. This is formulated as follows. 

\begin{assumption}
\label{ass:can_return_to_Q}
$\exists \varepsilon^* > 0$ such that there exists a control function $h: N_{\delta_\tau + \varepsilon^*}(Q) \times \reals^{\geq 0} \rightarrow U$ and a corresponding function $\mathit{ttq}: N_{\delta_\tau + \varepsilon^*}(Q) \rightarrow [0,\tau]$, such that 
$\forall x \in N_{\delta_\tau + \varepsilon^*}(Q)$, 
$\xi(x, h(x,\cdot), \mathit{ttq}(x)) \in Q$ and $\forall t \in [0,\mathit{ttq}(x)]$, $d(\xi(x, h(x,\cdot), t), Q) \leq d(x, Q)$. 
Moreover,  there exists some constant $c^* \geq 0$ such that for any $x_1, x_2 \in N_{\delta_\tau + \varepsilon^*}(Q)$,
$|\mathit{ttq}(x_1) - \mathit{ttq}(x_2)| \leq c^* \|x_1 - x_2\|$.  
\end{assumption}

Next, for any $\tau > 0$ and $\varepsilon \in (0,\varepsilon^*]$, we define a new control function $g: N_{\delta_\tau + \varepsilon}(Q) \times \reals^{\geq 0}\rightarrow U$ to be used in the algorithm. If the initial state $x$ is in $Q$, $g(x,\cdot)$ is equal to a control signal that ensures $(\tau,Q)$-recurrence, which exists by the assumption that $Q$ is  $\tau$-recurrent. Otherwise, it is equal to the control function $h$ defined in Assumption~\ref{ass:can_return_to_Q} 
up until reaching $Q$, i.e., until $\mathit{ttq}(x)$. After that, it is equal to the control function that ensures the trajectory is $\tau$-recurrent starting from the new initial state in $Q$.  

Formally, let $g': Q \times \reals^{\geq 0} \rightarrow U$ be such that for any $x\in Q$, the trajectory  $\xi(x,g'(x,\cdot), \cdot)$
is a $(\tau, Q)$-recurrent one. Such a function exists because of the assumption that $Q$ is a  $\tau$-recurrent set. 
We define $g$ as follows: $\forall t \geq 0, g(x,t) := g'(x,t)$ if $x \in Q$. If $x \in  N_{\delta_\tau + \varepsilon^*}(Q) \text{\textbackslash} Q$, $\forall t \leq \mathit{ttq}(x), g(x,t) := h(x,t)$ and $\forall t > \mathit{ttq}(x), g(x,t) := g'(\xi(x,h(x,\cdot),\mathit{ttq}(x)), t - \mathit{ttq}(x))$.

\subsection{Algorithm description}

Algorithm~\ref{code:recurrence} takes as input a $\tau$-recurrent set $Q$ for some $\tau > 0$, an $\varepsilon \in (0,\varepsilon^*]$ (where $\varepsilon^*$ is as defined in Assumption~\ref{ass:can_return_to_Q}), and the control function $g$ defined earlier. It also assumes to be given several functions:  $\mathit{sense}$,  $\mathit{quantize}$, $\mathit{encode}$, $\mathit{send}$, $\mathit{simulate}$, and $\mathit{sleep}$. The function $\mathit{sense}$ returns the current state of the system. The function  $\mathit{quantize}$ returns the closest point in the set given in its second argument to the point given in its first argument, according to $\infty$-norm. The function $\mathit{encode}$ maps the first argument to a bit vector that uniquely identifies it out of the set of states given in the second argument. The function $\mathit{send}$ sends the given bit vector over the limited bandwidth channel to the controller.  The function $\mathit{simulate}$ simulates the system starting from the state in its first argument following the control signal in its second argument until the time bound specified in its third argument. It returns the last state in the simulated trajectory. If the third argument is an interval, it returns the trajectory segment within that interval. Finally, the function $\mathit{sleep}$ makes the sensor wait for the amount of real time passed as argument before continuing the execution of the algorithm. The time of the algorithm execution is assumed to be negligible with respect to $\tau$.

The algorithm starts by initializing $S_0$ to $Q$ and constructing an $\varepsilon e^{-(L_\tau + \alpha)\tau}$-grid for it, which we denote by $C_0$. The algorithm then proceeds with an infinite loop. In each iteration, it sends a bit vector that encodes a state estimate to the controller, according to which it can identify the control function $u_i$ the system should follow in the time interval $[i\tau, (i+1)\tau)$. To produce the bit vector, the sensor measures the current state of the system $x_i$, i.e., $ \xi(x_0, \hat{u}, i\tau)$, where $\hat{u}$ is the control signal have been followed so far. Then, it quantizes $x_i$  to one of the centers $q_i$ in the grid $C_i$. The encoding of $q_i$ with respect of $C_i$ is the bit vector that the sensor sends. The controller, which is running a similar algorithm to Algorithm~\ref{code:recurrence}, but without the sensing, can recover $q_i$ as it knows $C_i$. Using $q_i$, it can choose the same control function $u_i$ that the sensor intends to use to construct $C_{i+1}$. 

After that, Algorithm~\ref{code:recurrence} computes $u_i$ to be equal to $g(q_i,[0,\tau))$.  
Then, it simulates the system for $\tau$ time units starting from $q_i$ and following $u_i$. It uses the last state in the simulated trajectory as the center of the ball $S_{i+1}$ which bounds the region where the next sensed state $x_{i+1}$ might be. The radius $r_{i+1}$ of $S_{i+1}$ is an $e^{\alpha \tau}$ factor smaller than that of $S_i$. After that, it constructs the grid $C_{i+1}$ to be the $r_{i+1}e^{-(L_\tau + \alpha)\tau}$-grid over $S_{i+1}$, according to which the next state, $x_{i+1}$, would be quantized. Finally, the sensor waits for the system to evolve for $\tau$ time units before sensing it again in the next iteration. 

\begin{algorithm}
\caption{Sensor algorithm for achieving recurrence} 
\label{code:recurrence}
\begin{algorithmic}[1]
\State {\bf input:} $Q$, $\varepsilon \in (0, \varepsilon^*]$, $\tau > 0$, $g: N_{\delta_\tau + \varepsilon}(Q) \times \reals^{\geq 0}\rightarrow U$
\State $S_0 \gets Q$
\State $r_0 \gets \varepsilon$
\State $C_0 \gets \mathit{grid}(S_0, r_0 e^{-(L_\tau+ \alpha) \tau})$\label{ln:initial_grid}
\item $i = 0$
\While{true}
\State $x_i \gets \mathit{sense}()$
\State $q_i \gets \mathit{quantize}(x_i, C_i)$
\State $\mathit{send}(\mathit{encode}(q_i, C_i))$
\State $u_i \gets g(q_i, [0,\tau))$
\State $r_{i+1} \gets r_{i} e^{-\alpha\tau}$
\State $S_{i+1} \gets B(\mathit{simulate}(q_{i},u_i,\tau), r_{i+1})$
\State $C_{i+1} \gets \mathit{grid}(S_{i+1}, r_{i+1}e^{-(L_\tau+ \alpha) \tau})$
\State $i \gets i + 1$
\State $\mathit{sleep}(\tau)$
\EndWhile
\end{algorithmic}
\end{algorithm}

\subsection{Algorithm guarantees}

Fix the inputs to Algorithm~\ref{code:recurrence}, i.e., a controlled $\tau$-recurrent set $Q$ and a corresponding $\tau$-recurrence achieving controller $g$. Moreover, fix any initial state $x_0\in Q$. Let $\hat{u}: \reals^{\geq 0} \rightarrow \reals^m$ be the concatenation of the $u_i$s produced by Algorithm~\ref{code:recurrence}, i.e., for any $t \geq 0$, $\hat{u}(t) =  u_i(t - i\tau)$, where $i = \lfloor t/\tau \rfloor$. Also, let $\hat{\xi}: \reals^{\geq 0} \rightarrow \reals^m$ be the concatenation of the $\tau$-sized fragments of trajectories $\xi(q_i, u_i, [0, \tau))$ produced by the algorithm, i.e., $\forall t \geq 0$ and $t \neq i\tau$ for some $i \in \mathbb{N}$, $\hat{\xi}(t) = \mathit{simulate}(q_i, u_i, t - i\tau)$ and $\hat{\xi}(i\tau) = q_i$, where $i = \lfloor t/\tau \rfloor$. Thus, $\hat{\xi}$ would be right-piece-wise-continuous. Finally, the trajectory that the system would have starting from $x_0$ following $\hat{u}$ is denoted as usual by $\xi(x_0,\hat{u}, \cdot)$.

\begin{theorem}
\label{thm:algorithm_guarantees}
Algorithm~\ref{code:recurrence} ensures that:
\begin{enumerate}
\item $\forall i \geq 0$, $x_i \in S_i$, and $\forall  t\geq 0, \|\hat{\xi}(t) - \xi(x_0, \hat{u}, t)\| \leq \varepsilon e^{-\alpha t}$,\label{thm:distance_between_xi_and_xi_hat}
\item $\forall i \in \mathbb{N}$, $\hat{\xi}[i\tau, \infty)$ is an $(\varepsilon e^{-i\alpha\tau}, \tau + c^* \varepsilon e^{-(i\alpha + L_\tau) \tau}, Q)$-recurrent function, and\label{thm:xi_hat_is_recurrent}
\item $\forall i \in \mathbb{N}$, $\xi(x_0, \hat{u}, [i\tau, \infty))$ is a $(2\varepsilon e^{-i\alpha \tau}, \tau + c^* \varepsilon e^{-(i\alpha + L_\tau) \tau}, Q)$-recurrent trajectory.\label{thm:xi_is_recurrent}  
\end{enumerate}

\end{theorem}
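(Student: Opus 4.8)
The plan is to establish the three items in order, since item~(2) relies on the state containment proved in item~(1), and item~(3) follows by transporting item~(2) along the trajectory-comparison bound of item~(1). The guiding observation is that the grid resolution $r_i e^{-(L_\tau+\alpha)\tau}$ at step $i$ is chosen \emph{exactly} so that one $\tau$-step of Grönwall growth (a factor $e^{L_\tau\tau}$) is cancelled by the contraction factor $e^{-\alpha\tau}$, leaving radius $r_{i+1}=r_i e^{-\alpha\tau}$.

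I would prove item~(1) by induction on $i$. The base case $x_0\in Q=S_0$ holds by hypothesis. For the step, assume $x_i\in S_i$; since $C_i$ is an $r_i e^{-(L_\tau+\alpha)\tau}$-grid of $S_i$, the quantizer yields $\|x_i-q_i\|\le r_i e^{-(L_\tau+\alpha)\tau}$. Because $x_{i+1}=\xi(x_i,u_i,\tau)$ while the center of $S_{i+1}$ is $\xi(q_i,u_i,\tau)$, Grönwall's inequality along the common input $u_i$ (with Lipschitz constant $L_\tau$ from~\eqref{eq:Ltau}) gives $\|x_{i+1}-\xi(q_i,u_i,\tau)\|\le e^{L_\tau\tau}\|x_i-q_i\|\le r_i e^{-\alpha\tau}=r_{i+1}$, so $x_{i+1}\in S_{i+1}$. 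The continuous-time bound is identical in spirit: for $t\in[i\tau,(i+1)\tau)$ both $\hat\xi(t)$ and $\xi(x_0,\hat u,t)$ evolve under $u_i$ from $q_i$ and $x_i$, so $\|\hat\xi(t)-\xi(x_0,\hat u,t)\|\le e^{L_\tau(t-i\tau)}\|q_i-x_i\|$; substituting the grid bound and writing $s=t-i\tau$, the desired estimate $\le\varepsilon e^{-\alpha t}$ reduces to $-L_\tau(\tau-s)\le\alpha(\tau-s)$, which is immediate since $\tau-s\ge0$ and $L_\tau,\alpha\ge0$.

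The crux is item~(2), which I would prove by tracking the visits of $\hat\xi$ to $Q$ fragment by fragment. By the definition of $g$, on $[j\tau,(j+1)\tau)$ the piece $\phi_j(s):=\xi(q_j,g(q_j,\cdot),s)$ either starts in $Q$ (so $\hat\xi(j\tau)=q_j\in Q$) or returns to $Q$ at time $\mathit{ttq}(q_j)\le\tau$ and is $(\tau,Q)$-recurrent afterward; Assumption~\ref{ass:can_return_to_Q} keeps $d(\phi_j(s),Q)\le d(q_j,Q)$ during the return, and the containment lemma (Lemma~\ref{lem:containment}) places the endpoint $\xi(q_j,u_j,\tau)$ in $N_{\delta_\tau}(Q)$. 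The only genuine difficulty is the discontinuity at each boundary $(j+1)\tau$, where $\hat\xi$ jumps from $\xi(q_j,u_j,\tau)$ to the freshly quantized $q_{j+1}$; here one must first check (via item~(1), giving $\|q_{j+1}-\xi(q_j,u_j,\tau)\|\le 2r_{j+1}$) that $q_{j+1}$ still lies in the domain $N_{\delta_\tau+\varepsilon^*}(Q)$ of $g$, and then control the resulting change of return time through the Lipschitz continuity of $\mathit{ttq}$ with constant $c^*$. This is precisely what converts a state discrepancy of order $r_i e^{-L_\tau\tau}$ into the extra window term $c^*\varepsilon e^{-(i\alpha+L_\tau)\tau}$. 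Combining the within-fragment $(\tau,Q)$-recurrence with this boundary estimate shows that consecutive guaranteed $Q$-visits are never more than $\tau+c^*\varepsilon e^{-(i\alpha+L_\tau)\tau}$ apart, and since $r_j\le r_i$ for $j\ge i$ every such visit lands in $N_{\varepsilon e^{-i\alpha\tau}}(Q)$, which is the recurrence claim for $\hat\xi[i\tau,\infty)$.

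Finally, item~(3) follows by enlarging item~(2) with item~(1): if $\hat\xi(t')\in N_{\varepsilon e^{-i\alpha\tau}}(Q)$ for some $t'\ge i\tau$, then $d(\xi(x_0,\hat u,t'),Q)\le d(\hat\xi(t'),Q)+\|\xi(x_0,\hat u,t')-\hat\xi(t')\|\le \varepsilon e^{-i\alpha\tau}+\varepsilon e^{-\alpha t'}\le 2\varepsilon e^{-i\alpha\tau}$, with the same window $\tau+c^*\varepsilon e^{-(i\alpha+L_\tau)\tau}$; this is exactly the factor-of-two enlargement in the statement. I expect item~(2) to be the main obstacle: making the bookkeeping of the boundary jumps rigorous — in particular verifying, uniformly in $j\ge i$, that the last $Q$-visit in fragment $j$ and the first $Q$-visit in fragment $j+1$ differ by at most $\tau+c^*\varepsilon e^{-(i\alpha+L_\tau)\tau}$ — is where the Lipschitz hypothesis on $\mathit{ttq}$ must be invoked with care.
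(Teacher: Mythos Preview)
Your proposal is correct and follows essentially the same route as the paper: induction with Gr\"onwall for item~(1), a fragment-by-fragment argument for item~(2) that combines the definition of $g$, the containment lemma, and the Lipschitz constant $c^*$ of $\mathit{ttq}$ to control the jump at each boundary $j\tau$, and the triangle inequality for item~(3). The paper phrases the inductive claim for item~(2) slightly differently---it tracks, for each fragment, either an interior visit to $Q$ or the left limit $\lim_{t\to((i+1)\tau)^-}\hat\xi(t)\in Q$---but the substance and the use of $c^*$ to convert the state jump into the extra window term are identical to what you outline.
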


\begin{proof}
First, we will prove part \ref{thm:distance_between_xi_and_xi_hat}) by induction. For the base case: $x_0 \in S_0$ and $\|\hat{\xi}(0)  - x_0 \|\leq \varepsilon e^{-(L_\tau + \alpha) \tau} \leq \varepsilon$, which hold by the fact that $C_0$ is a grid over $Q$ with cells of radii $r_0 = \varepsilon e^{-(L_\tau + \alpha) \tau}$ and $\hat{\xi}(0) = q_0$.

Inductive case: fix an $i \in \mathbb{N}$ and assume that 
$x_i \in S_i$ and $\forall t \in [0, i\tau]$, $\|\hat{\xi}(t) - \xi(x_0, \hat{u}, t)\| \leq \varepsilon e^{-\alpha t}$.   
By Grönwall's inequality, 
$\forall t\in [i\tau,(i+1) \tau], \|\xi(x_i,u_i,t) - \xi(q_i,u_i,t)\| \leq e^{L_\tau (t - i\tau)}\|x_i - q_i\| \leq \varepsilon e^{- ((i+1)\alpha + L_\tau)\tau } e^{L_\tau (t - i\tau)} \leq \varepsilon e^{-(i+1)\alpha \tau} \leq \varepsilon e^{-\alpha t}$.
Recall that $x_{i+1} = \xi(x_i,u_i,\tau)$ and $r_{i+1} = \varepsilon e^{-(i+1)\alpha \tau}$.
Thus, $x_{i+1} \in B(\xi(q_i,u_i,\tau), r_{i+1})$, and the latter is $S_{i+1}$. Then, since $C_{i+1}$ is a grid over $S_{i+1}$ with granularity $r_{i+1} e^{- (L_\tau+ \alpha) \tau}$ and $ \hat{\xi}((i+1)\tau) = q_{i+1}$, $\|x_{i+1} - \hat{\xi}((i+1)\tau)\| = \|x_{i+1} - q_{i+1}\|\leq r_{i+1} e^{-(L_\tau+ \alpha) \tau}$, and thus $\|\xi(x_0, \hat{u}, (i+1)\tau) - \hat{\xi}((i+1)\tau)\|\leq \varepsilon e^{- ((i+2)\alpha + L_\tau) \tau} \leq \varepsilon e^{-(i+1)\alpha \tau}$. 
That proves the inductive argument for part \ref{thm:distance_between_xi_and_xi_hat}). 

We prove part \ref{thm:xi_hat_is_recurrent}) also by induction. We will prove the stronger claim that for any $i \in \mathbb{N}$, either $\hat{\xi}(t) \in Q$ for some $t\in [i\tau, (i+1)\tau)$ or $\lim_{t\rightarrow ((i+1)\tau)^-} \hat{\xi}(t) \in Q$, and $|t_{i+1} - t_{i}| \leq \tau + c^* \varepsilon e^{-i\alpha \tau}$, where $t_{i}$ is the last time instant in $[i\tau, (i+1)\tau)$ such that $\hat{\xi}(t_{i}) \in Q$ or equal to $ (i+1)\tau$, otherwise, and $t_{i+1}$ is the first time instant in $[(i+1)\tau, (i+2)\tau)$  such that $\hat{\xi}(t_{i+1}) \in Q$, or $t_{i+1}^* = (i+2)\tau$, otherwise.  When it is the case that $\lim_{t\rightarrow ((i+1)\tau)^-} \hat{\xi}(t) \in Q$, we know from part~\ref{thm:distance_between_xi_and_xi_hat}) that $q_{i+1} \in S_{i+1}$, which is centered at the value of that limit and has a radius of $r_{i+1}$. Thus, $\hat{\xi}((i+1)\tau)$, which is equal to $q_{i+1}$, would be at most $r_{i+1}$ (i.e., $\varepsilon e^{-(i+1)\alpha \tau}$) from $Q$. 

Base case: by assumption, $x_0 \in Q$ and $\hat{\xi}(0) = q_0$. By part~\ref{thm:distance_between_xi_and_xi_hat}), $\| \hat{\xi}(0) - x_0\| \leq \varepsilon e^{-(L_\tau + \alpha)\tau}$ and thus $\hat{\xi}(0) \in N_{\varepsilon e^{-(L_\tau + \alpha)\tau}}(Q) \subseteq N_{\delta_\tau + \varepsilon e^{-(L_\tau + \alpha)\tau}}(Q)$. If $q_0 \in Q$, then $u_0$ is equal to $g(q_0, [0,\tau))$. That would result in $\hat{\xi}([0, \tau))$ being a prefix of a $(\tau, Q)$-recurrent trajectory starting from $q_0$, by the definition of $g$. Thus, either $\hat{\xi}(t) \in Q$ for some $t \in (0,\tau)$ or $\lim_{t \rightarrow \tau^-}\hat{\xi}(t) \in Q$. 
%
Moreover, by the containment lemma (Lemma~\ref{lem:containment}), $\forall t\in [0, \tau), \hat{\xi}(t) \in N_{\delta_\tau}(Q)$. 
Thus, $\lim_{t \rightarrow \tau^-}\hat{\xi}(t)$, which is the center of $S_1$, would be in $N_{\delta_\tau}(Q)$. Thus, $\hat{\xi}(\tau)$, which is equal to $q_1$ and belongs to $S_1$, 
would be in $N_{\delta_\tau + \varepsilon e^{-\alpha \tau}}(Q)$.  

If $\hat{\xi}(0) \in N_{\delta_\tau + \varepsilon}(Q)$\textbackslash$Q$ instead, then, by Assumption~\ref{ass:can_return_to_Q}, applying the control $g(q_0, [0, \tau])$  will result in $\hat{\xi}(t) \in N_{\delta_\tau + \varepsilon}(Q)$, for all $t \in [0, \mathit{ttq}(x)]$.
If $\mathit{ttq}(x) = \tau$, then as in the first case, $\lim_{t\rightarrow \tau^-}\hat{\xi}(t) \in Q$
Otherwise, if $\mathit{ttq}(x) < \tau$, then $\hat{\xi}(\mathit{ttq}(x)) \in Q$. Moreover, in the interval $(\mathit{ttq}(x),\tau)$, $\hat{\xi}$ 
would be equal to the trajectory $\xi(\xi(q_0, u_0, \mathit{ttq}(x)),$ $u_0((\mathit{ttq}(x), \tau)), \cdot)$, which is a $(\tau,Q)$-recurrent trajectory by the definition of $u_0((\mathit{ttq}(x),\tau])$ being equal to $g(\xi(q_0, u_0, \mathit{ttq}(x)), [t-\mathit{ttq}(x), \tau - \mathit{ttq}(x)))$. Thus, by the containment lemma (Lemma~\ref{lem:containment}), it is contained in $N_{\delta_\tau}(Q)$. 
Hence, $\lim_{t\rightarrow \tau^-} \hat{\xi}(t)$, the center of $S_1$, is in $N_{\delta_\tau}(Q)$.
By part \ref{thm:distance_between_xi_and_xi_hat}), $x_1 \in S_1$. Also, $q_1 \in S_1$ and $\hat{\xi}(\tau) = q_1$. Thus, $\hat{\xi}(\tau) \in N_{\delta_\tau + r_1}(Q) = N_{\delta_\tau + \varepsilon e^{-\alpha \tau}}(Q)$.

Inductive case: fix an $i \geq 1$ and assume that that part~\ref{thm:xi_hat_is_recurrent}) is true until time $i\tau$. 
Thus, there exists a time instant $t \in [0,\tau)$ such that $\hat{\xi}((i-1)\tau + t) \in Q$ or $\lim_{t\rightarrow (i\tau)^-} \hat{\xi}(t) \in Q$. 
Let $t_{i-1}$ be the largest such instant.
If $t_{i-1} <\tau$ and we simulate system~(\ref{eq:system}) following $g(\xi(q_{i-1},u_{i-1}, i\tau + t_{i-1}), \cdot)$ starting from $\xi(q_{i-1},u_{i-1}, i\tau + t_{i-1})$, the resulting trajectory will be $(\tau,Q)$-recurrent. Thus, there exists $t' \in (0,\tau]$ such that that trajectory belongs to $Q$ at time $i\tau + t'$. However, $\hat{\xi}$ is equal to that trajectory only in the interval $[(i-1)\tau + t_{i-1}, i\tau)$. In the interval $[i\tau, (i+1)\tau)$, $\hat{\xi}$ will be equal to the trajectory that starts from $q_i$ and follows $g(q_i, \cdot)$. If $q_i \in Q$, then $\hat{\xi}$ would have visited $Q$ within $\tau - t_{i-1}^{*}$ time units, which is less than $\tau$. If $q_i \notin Q$, then from part~\ref{thm:distance_between_xi_and_xi_hat}), we know that $q_i \in S_i$ and thus $\|q_i - \lim_{t\rightarrow i\tau^-}\hat{\xi}(t)\|\leq r_i = \varepsilon e^{- i \alpha \tau}$. Then, by Assumption~\ref{ass:can_return_to_Q}, we know that $\hat{
\xi}$ would reach $Q$ at or before $\min \{(i+1)\tau, t_{i-1} + \tau + c^* \|q_i -  \lim_{t\rightarrow i\tau^-}\hat{\xi}(t) \|\}$, which is upper bounded by  $\min \{(i+1)\tau, t_{i-1} + \tau + c^*\varepsilon e^{- i \alpha \tau}\}$. We can conclude that the time between two time instants at which $\hat{\xi}$ belongs to $Q$ in the intervals $[(i-1)\tau, i\tau)$ and $[i\tau, (i+1)\tau)$ is less than or equal to $\tau +  c^*\varepsilon e^{- i \alpha \tau}$. 


Finally, part~\ref{thm:xi_is_recurrent}) follows from combining parts~\ref{thm:distance_between_xi_and_xi_hat}) and  part~\ref{thm:xi_hat_is_recurrent}) and using the triangular inequality $d(\xi(x_0,\hat{u},t), Q) \leq \|\xi(x_0,\hat{u},t) - \hat{\xi}(t)\| + d(\hat{\xi}(t), Q)$ at the time instants where $\hat{\xi}(t)$ is visiting $N_{\delta_\tau + \varepsilon e^{-i\alpha\tau}}(Q)$. We obtain that $\xi(x_0,\hat{u},\cdot)$ visits $N_{\delta_\tau + 2\varepsilon e^{-i\alpha \tau}}(Q)$ in the $[i\tau, (i+1)\tau)$ interval. Thus, $\xi(x_0,\hat{u}, [i\tau, \infty))$ is a $(2\varepsilon e^{-i\alpha \tau}, \tau +  c^*\varepsilon e^{- i \alpha \tau}, Q)$-recurrent trajectory.
\end{proof}

It follows that the trajectory of system~(\ref{eq:system}) when following the controller $\hat{u}$ produced by Algorithm~\ref{code:recurrence} asymptotically approaches a $(\tau, Q)$-recurrent trajectory.
\begin{corollary}
\label{cor:asymptotic_recurrence}
As $t\rightarrow \infty$, $\xi(x_0, \hat{u}, [t, \infty))$ is a $(\tau, Q)$-recurrent trajectory.
\end{corollary}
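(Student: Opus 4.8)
The plan is to read the corollary as the precise asymptotic statement it encodes and then obtain it as a direct limit of the third guarantee of \thmref{algorithm_guarantees}. Since $N_0(Q) = Q$, a $(\tau, Q)$-recurrent trajectory is one whose tail revisits $Q$ \emph{exactly} within every window of length \emph{exactly} $\tau$; the guarantees available at any finite time only place the tail inside $N_{\varepsilon_i}(Q)$ with $\varepsilon_i>0$ and within windows of length $\tau_i>\tau$, so the rigorous content of the corollary is: for every $\varepsilon' > 0$ and every $\tau' > \tau$ there is a $t^\ast \geq 0$ such that $\xi(x_0, \hat{u}, [t, \infty))$ is an $(\varepsilon', \tau', Q)$-recurrent trajectory for all $t \geq t^\ast$. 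I would first record this reduction, working in the regime $\alpha > 0$ (the genuinely convergent case; for $\alpha = 0$ one only recovers the fixed $(\varepsilon, \tau, Q)$ guarantee).

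First I would invoke part~\ref{thm:xi_is_recurrent}) of \thmref{algorithm_guarantees}, which states that for every $i \in \mathbb{N}$ the trajectory $\xi(x_0, \hat{u}, [i\tau, \infty))$ is $(\varepsilon_i, \tau_i, Q)$-recurrent with $\varepsilon_i := 2\varepsilon e^{-i\alpha\tau}$ and $\tau_i := \tau + c^\ast \varepsilon e^{-(i\alpha + L_\tau)\tau}$. Because $\alpha > 0$ and $\tau > 0$, both sequences decrease monotonically with $\varepsilon_i \searrow 0$ and $\tau_i \searrow \tau$ as $i \to \infty$. This already exhibits the two parameters that must be driven to their limiting values $0$ and $\tau$.

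Next I would supply the short tail-monotonicity observation that lets me pass from the discrete family $\{[i\tau, \infty)\}_{i}$ to arbitrary starting times: if $\xi(x_0, \hat{u}, [s_0, \infty))$ is $(\delta, \sigma, Q)$-recurrent, then so is $\xi(x_0, \hat{u}, [t, \infty))$ for every $t \geq s_0$, since the defining condition of \defref{recurrent_trajectories} quantifies universally over start times in the interval and a later tail only restricts that quantifier. Combining this with the previous step, given $\varepsilon' > 0$ and $\tau' > \tau$ I would choose $i^\ast$ with $\varepsilon_{i^\ast} \leq \varepsilon'$ and $\tau_{i^\ast} \leq \tau'$ (possible since $\varepsilon_i \searrow 0$ and $\tau_i \searrow \tau$) and set $t^\ast := i^\ast \tau$; then for all $t \geq t^\ast$ the tail is $(\varepsilon_{i^\ast}, \tau_{i^\ast}, Q)$-recurrent, hence $(\varepsilon', \tau', Q)$-recurrent, because enlarging the tolerance and the window only weakens the requirement. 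As $\varepsilon'$ and $\tau'$ are arbitrary, the tail of $\xi(x_0, \hat{u}, \cdot)$ approaches $(\tau, Q)$-recurrence, which is the assertion.

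The main obstacle is conceptual rather than computational: clarifying the meaning of the limiting statement, since for no finite $t$ is the guarantee exactly tolerance $0$ and window exactly $\tau$. Recasting the corollary through the $(\varepsilon', \tau', Q)$ family resolves this cleanly. The only technical points to verify are the monotonicity of $\varepsilon_i$ and $\tau_i$ (immediate from $\alpha, \tau > 0$) and the tail-restriction fact, both routine; no estimates beyond \thmref{algorithm_guarantees} are needed.
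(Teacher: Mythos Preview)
Your proposal is correct and matches the paper's approach: the paper gives no explicit proof, presenting the corollary as an immediate consequence of part~\ref{thm:xi_is_recurrent}) of \thmref{algorithm_guarantees}, and your argument is precisely the natural elaboration of that claim. Your explicit interpretation of the limiting statement via the $(\varepsilon',\tau',Q)$ family and your observation that the conclusion genuinely requires $\alpha>0$ (since for $\alpha=0$ the parameters $\varepsilon_i$ and $\tau_i$ do not decay) are useful clarifications that the paper leaves implicit.
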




In the following theorem, we show that the bit rate of Algorithm~\ref{code:recurrence} matches the upper bound on $\tau$-recurrence entropy we presented in Section~\ref{sec:bounds}.
\begin{theorem}
\label{thm:algorithm_bitrate}
The average bit rate at which a sensor running  Algorithm~\ref{code:recurrence} will send to the controller is equal to $n(L_\tau + \alpha) / \ln 2$. 
\end{theorem}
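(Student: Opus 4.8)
The plan is to reduce the bit-rate computation to a count of grid points per message. First I would observe that, because of the \(\mathit{sleep}(\tau)\) call and the assumption that the algorithm's computation time is negligible relative to \(\tau\), the sensor executes exactly one iteration of the while loop per \(\tau\) units of real time, and in iteration \(i\) it transmits precisely \(\mathit{encode}(q_i,C_i)\), i.e.\ the bits needed to single out one of the \(|C_i|\) centers of the grid \(C_i\). Hence the number of bits sent in iteration \(i\) is \(\log|C_i|\), and the whole problem collapses to evaluating \(|C_i|\) and summing.

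The crux of the argument is that \(|C_i|\) is \emph{constant} across iterations \(i\ge 1\). From the algorithm, \(r_i=\varepsilon e^{-i\alpha\tau}\), the set \(S_i\) is the \(\infty\)-norm ball \(B(\cdot,r_i)\), and \(C_i=\mathit{grid}(S_i,\,r_i e^{-(L_\tau+\alpha)\tau})\). A \(\delta\)-grid of a hypercube of half-side \(r_i\) places centers \(2\delta\) apart along each of the \(n\) axes, covering an interval of length \(2r_i\) per axis, so it has \((r_i/\delta)^n\) points with \(\delta=r_i e^{-(L_\tau+\alpha)\tau}\). The radius \(r_i\) cancels, giving
\[
|C_i|=\Bigl(\tfrac{r_i}{r_i e^{-(L_\tau+\alpha)\tau}}\Bigr)^{n}=e^{\,n(L_\tau+\alpha)\tau},
\]
a value independent of \(i\). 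The geometric shrinking of the radius by the factor \(e^{-\alpha\tau}\) is matched exactly by the shrinking of the grid granularity, which is the mechanism keeping the per-message payload fixed. Consequently each message after the first carries \(\log|C_i|=n(L_\tau+\alpha)\tau\,\log e = n(L_\tau+\alpha)\tau/\ln 2\) bits, using \(\log e = 1/\ln 2\).

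Next I would dispose of the boundary term and take the limit. For \(i=0\) we have \(S_0=Q\) and \(C_0=\mathit{grid}(Q,\,\varepsilon e^{-(L_\tau+\alpha)\tau})\), which is finite since \(Q\) is compact, contributing a one-time cost \(\log|C_0|\). Writing \(N=\lfloor T/\tau\rfloor\),
\[
\#\text{bits}(T)=\log|C_0| + N\cdot\frac{n(L_\tau+\alpha)\tau}{\ln 2},
\]
so dividing by \(T\) and letting \(T\to\infty\) makes \(\log|C_0|/T\to 0\) and \(N\tau/T\to 1\), yielding the claimed average bit rate \(n(L_\tau+\alpha)/\ln 2\).

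The main obstacle is purely a matter of bookkeeping rather than of depth: the grid-point count \((r_i/\delta)^n\) and the encoding cost \(\log|C_i|\) are treated without integer rounding, whereas a literal implementation would use \(\lceil r_i/\delta\rceil\) centers per axis and \(\lceil\log|C_i|\rceil\) bits, each contributing only an additive \(O(1)\) per message. Since the statement asserts an exact equality, I would note that the analysis adopts the idealized (unrounded) counts, consistent with the entropy upper bound of Theorem~\ref{thm:loose upper bound}; with ceilings the stated value becomes an asymptotic bound up to the rounding overhead, which vanishes in the rate only if that overhead stays \(O(1)\) per step (it does, since \(|C_i|\) is constant). A secondary point to check is that \(S_i\) is genuinely an \(\infty\)-norm ball, so that \(\mathit{grid}\) produces an axis-aligned hypercubic lattice of the asserted size; this is immediate from the definition \(S_{i+1}=B(\mathit{simulate}(q_i,u_i,\tau),r_{i+1})\) in the algorithm.
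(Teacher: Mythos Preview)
Your proposal is correct and follows essentially the same approach as the paper: compute $|C_i|$ from the ratio of the ball radius to the grid granularity, observe that this is constant for $i\ge 1$ so each message carries $n(L_\tau+\alpha)\tau/\ln 2$ bits, and let the finite $|C_0|$ term vanish in the time average. If anything, your explicit discussion of the ceiling/rounding issue is more careful than the paper's own proof, which writes $|C_i|=\lceil e^{(L_\tau+\alpha)\tau}\rceil^n$ but then silently drops the ceiling when passing to the stated limit.
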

\begin{proof}
Fix any $i\in \mathbb{N}$. The number of bits that the sensor running Algorithm~\ref{code:recurrence} sends at the time instant $t = i\tau$ is $\log |C_i|$. Given any time bound $T \geq 0$, the total number of bits sent by the sensor over $[0,T]$ is equal to $\sum_{i = 0}^{\lfloor T/\tau\rfloor} \log |C_i|$. Thus, the average bit rate is $\lim_{T \rightarrow \infty} \sum_{i = 0}^{\lfloor T/\tau\rfloor} \frac{\log |C_i|}{T}$.

We can observe that $\forall i \geq 0$, $C_i = \lceil\frac{\text{diam}(S_i)}{2 r_i e^{-(L_\tau + \alpha)\tau}}\rceil^n$. Then, $C_0 = \lceil\frac{\text{diam}(Q)}{2 \varepsilon e^{-(L_\tau + \alpha)\tau}}\rceil^n$ and for any $i \geq 1$, $C_i = \lceil \frac{2r_i}{2 r_i e^{-(L_\tau + \alpha)\tau}} \rceil^n = \lceil e^{(L_\tau + \alpha)\tau} \rceil^n$. 

Thus, the average bit rate is equal to:
\begin{align}
& \lim_{T\rightarrow \infty} \frac{1}{T} \sum_{i = 0}^{\lfloor T/\tau\rfloor} \frac{\log |C_i|}{T} \nonumber \\ 
&= \lim_{T\rightarrow \infty} \frac{1}{T} \big( \lceil\frac{\text{diam}(Q)}{2 \varepsilon e^{-(L_\tau + \alpha)\tau}}\rceil^n + \sum_{i = 1}^{\lfloor T/\tau\rfloor} \log  \lceil e^{(L_\tau + \alpha)\tau} \rceil^n\big) \nonumber \\
&= n (L_\tau + \alpha)/\ln 2.
\end{align}

\end{proof}



\section{Conclusions and Future Work}\label{sec:conclusions}
We present the notion of $\tau$-recurrence entropy for nonlinear control systems as a generalization of the notion of invariance entropy. In $\tau$-recurrence of the system with respect to a predefined compact set, the trajectories can leave it, but only for $\tau$ time units whenever it does. $\tau$-recurrence entropy measures the exponential rate at which the number of control signals that are sufficient to make the system $\tau$-recurrent increases with time. We show that $\tau$-recurrence entropy is bounded from above and below by the invariance entropy of the system with respect to different compact sets. Moreover, we show that it converges to invariance entropy with respect to the same set as $\tau$ decreases, as expected. Then, we derive upper and lower bounds on $\tau$-recurrence entropy as a function of the system dimension, local Lipschitz constant, and the divergence of the vector field. We show that both bounds converge to known corresponding bounds on invariance entropy as $\tau \rightarrow 0$, as expected. We then show that the average bit rate of a recurrence-achieving algorithm
is lower bounded by the $\tau$-recurrence entropy. Finally, we present such an algorithm 
that guarantees exponential convergence to a $(\tau,Q)$-recurrent one with an average bit rate equal to the upper bound on entropy we derived plus a linear term in the rate of convergence multiplied by the state dimension. 

A possible future direction would be to design an algorithm similar to Algorithm~\ref{code:recurrence} that instead of constructing a moving grid online each $\tau$ seconds, it uses fixed grid. This might increase the required bit rate, but would save the sensor and controller from significant online computations that can be done offline instead. Another direction would be to design numerical methods to estimate entropy using abstractions, as \citep{tomar2022numerical}.

\bibliographystyle{IEEEtran}
\bibliography{refs,mallada,hussein_entropy}

\begin{thebibliography}{10}
\providecommand{\url}[1]{#1}
\csname url@samestyle\endcsname
\providecommand{\newblock}{\relax}
\providecommand{\bibinfo}[2]{#2}
\providecommand{\BIBentrySTDinterwordspacing}{\spaceskip=0pt\relax}
\providecommand{\BIBentryALTinterwordstretchfactor}{4}
\providecommand{\BIBentryALTinterwordspacing}{\spaceskip=\fontdimen2\font plus
\BIBentryALTinterwordstretchfactor\fontdimen3\font minus
  \fontdimen4\font\relax}
\providecommand{\BIBforeignlanguage}[2]{{%
\expandafter\ifx\csname l@#1\endcsname\relax
\typeout{** WARNING: IEEEtran.bst: No hyphenation pattern has been}%
\typeout{** loaded for the language `#1'. Using the pattern for}%
\typeout{** the default language instead.}%
\else
\language=\csname l@#1\endcsname
\fi
#2}}
\providecommand{\BIBdecl}{\relax}
\BIBdecl

\bibitem{katok1995introduction}
A.~Katok and B.~Hasselblatt, \emph{Introduction to the modern theory of
  dynamical systems}.\hskip 1em plus 0.5em minus 0.4em\relax Cambridge
  university press, 1995, no.~54.

\bibitem{adler1965topological}
R.~L. Adler, A.~G. Konheim, and M.~H. McAndrew, ``Topological entropy,''
  \emph{Transactions of the American Mathematical Society}, vol. 114, no.~2,
  pp. 309--319, 1965.

\bibitem{bowen1971entropy}
R.~Bowen, ``Entropy for group endomorphisms and homogeneous spaces,''
  \emph{Transactions of the American Mathematical Society}, vol. 153, pp.
  401--414, 1971.

\bibitem{bowen1971periodic}
------, ``Periodic points and measures for axiom a diffeomorphisms,''
  \emph{Trans. Amer. Math. Soc}, vol. 154, 1971.

\bibitem{liberzon2016entropy}
D.~Liberzon and S.~Mitra, ``Entropy and minimal data rates for state estimation
  and model detection,'' in \emph{Proceedings of the 19th International
  Conference on Hybrid Systems: Computation and Control}, 2016, pp. 247--256.

\bibitem{savkin2006analysis}
A.~V. Savkin, ``Analysis and synthesis of networked control systems:
  Topological entropy, observability, robustness and optimal control,''
  \emph{Automatica}, vol.~42, no.~1, pp. 51--62, 2006.

\bibitem{sibai2017optimal}
H.~Sibai and S.~Mitra, ``Optimal data rate for state estimation of switched
  nonlinear systems,'' in \emph{Proceedings of the 20th International
  Conference on Hybrid Systems: Computation and Control}, 2017, pp. 71--80.

\bibitem{sibai_hscc_2018}
\BIBentryALTinterwordspacing
------, ``State estimation of dynamical systems with unknown inputs: Entropy
  and bit rates,'' in \emph{Proceedings of the 21st International Conference on
  Hybrid Systems: Computation and Control (Part of CPS Week)}, ser. HSCC
  '18.\hskip 1em plus 0.5em minus 0.4em\relax New York, NY, USA: Association
  for Computing Machinery, 2018, p. 217–226. [Online]. Available:
  \url{https://doi.org/10.1145/3178126.3178150}
\BIBentrySTDinterwordspacing

\bibitem{sibai_entropy_tac}
------, ``State estimation of continuous-time dynamical systems with uncertain
  inputs with bounded variation: Entropy, bit rates, and relation with switched
  systems,'' \emph{IEEE Transactions on Automatic Control}, pp. 1--16, 2023.

\bibitem{kawan2018optimal}
C.~Kawan and S.~Y{\"u}ksel, ``On optimal coding of non-linear dynamical
  systems,'' \emph{IEEE Transactions on Information Theory}, vol.~64, no.~10,
  pp. 6816--6829, 2018.

\bibitem{matveev2016observation}
A.~Matveev and A.~Pogromsky, ``Observation of nonlinear systems via finite
  capacity channels: Constructive data rate limits,'' \emph{Automatica},
  vol.~70, pp. 217--229, 2016.

\bibitem{matveev2019observation}
A.~S. Matveev and A.~Y. Pogromsky, ``Observation of nonlinear systems via
  finite capacity channels, part ii: Restoration entropy and its estimates,''
  \emph{Automatica}, vol. 103, pp. 189--199, 2019.

\bibitem{delchamps1990stabilizing}
D.~F. Delchamps, ``Stabilizing a linear system with quantized state feedback,''
  \emph{IEEE transactions on automatic control}, vol.~35, no.~8, pp. 916--924,
  1990.

\bibitem{nair2004topological}
G.~N. Nair, R.~J. Evans, I.~M. Mareels, and W.~Moran, ``Topological feedback
  entropy and nonlinear stabilization,'' \emph{IEEE Transactions on Automatic
  Control}, vol.~49, no.~9, pp. 1585--1597, 2004.

\bibitem{colonius2012minimal}
F.~Colonius, ``Minimal bit rates and entropy for exponential stabilization,''
  \emph{SIAM Journal on Control and Optimization}, vol.~50, no.~5, pp.
  2988--3010, 2012.

\bibitem{colonius-kawan-2009}
F.~Colonius and C.~Kawan, ``Invariance entropy for control systems,''
  \emph{SIAM Journal on Control and Optimization}, vol.~48, no.~3, pp.
  1701--1721, 2009.

\bibitem{colonius2011invariance}
------, ``Invariance entropy for outputs,'' \emph{Mathematics of Control,
  Signals, and Systems}, vol.~22, no.~3, pp. 203--227, 2011.

\bibitem{colonius2013note}
F.~Colonius, C.~Kawan, and G.~Nair, ``A note on topological feedback entropy
  and invariance entropy,'' \emph{Systems \& Control Letters}, vol.~62, no.~5,
  pp. 377--381, 2013.

\bibitem{tomar2022numerical}
M.~S. Tomar, C.~Kawan, and M.~Zamani, ``Numerical over-approximation of
  invariance entropy via finite abstractions,'' \emph{Systems \& Control
  Letters}, vol. 170, p. 105395, 2022.

\bibitem{Khalil2002}
H.~K. Khalil, ``{Nonlinear systems; 3rd ed.}'' 2002.

\bibitem{genesio1985estimation}
R.~Genesio, M.~Tartaglia, and A.~Vicino, ``On the estimation of asymptotic
  stability regions: State of the art and new proposals,'' \emph{IEEE
  Transactions on automatic control}, vol.~30, no.~8, pp. 747--755, 1985.

\bibitem{sbm2022cdc}
\BIBentryALTinterwordspacing
Y.~Shen, M.~Bichuch, and E.~Mallada, ``Model-free learning of regions of
  attraction via recurrent sets,'' in \emph{61st IEEE Conference on Decision
  and Control (CDC)}, 12 2022, pp. 4714--4719. [Online]. Available:
  \url{http://mallada.ece.jhu.edu/pubs/2022-CDC-SBM.pdf}
\BIBentrySTDinterwordspacing

\bibitem{sspm2023cdc}
R.~Siegelmann, Y.~Shen, F.~Paganini, and E.~Mallada, ``A recurrence-based
  direct method for stability analysis and {GPU}-based verification of
  non-monotonic {Lyapunov} functions,'' in \emph{62nd IEEE Conference on
  Decision and Control (CDC)}, 2023.

\bibitem{tomar2023towards}
M.~S. Tomar and M.~Zamani, ``Towards minimal data rate enforcing regular safety
  properties: An invariance entropy approach,'' \emph{IEEE Control Systems
  Letters}, 2023.

\bibitem{lin1996smooth}
Y.~Lin, E.~D. Sontag, and Y.~Wang, ``A smooth converse lyapunov theorem for
  robust stability,'' \emph{SIAM Journal on Control and Optimization}, vol.~34,
  no.~1, pp. 124--160, 1996.

\end{thebibliography}

\end{document}